\title{Asymptotic-preserving Particle-In-Cell methods for the Vlasov-Maxwell system near quasi-neutrality}
\author{P. Degond$^\dagger$, F. Deluzet$^{\ddagger}$, D. Doyen$^\star$ \\[1em]
$^\dagger$Department of Mathematics, Imperial College London, \\
London SW7 2AZ, United Kingdom, \\
pdegond@imperial.ac.uk\\[1em]
$^\ddagger$Universit\'e de Toulouse; UPS, INSA, UT1, UTM, Institut de Math\'ematiques de Toulouse,\\
CNRS, Institut de Math´ematiques de Toulouse UMR 5219,
F-31062 Toulouse, France,\\
fabrice.deluzet@math.univ-toulouse.fr \\[1em]
$^\star$Universit\'e de Marne-la-Vall\'ee, Laboratoire d'Analyse et de Math\'ematiques Appliqu\'ees,\\
CNRS, Laboratoire d'Analyse et de Math\'ematiques Appliqu\'ees UMR 8050,\\
5, boulevard Descartes, Cit\'e Descartes - Champs-sur-Marne,
F-77454 Marne-la-Vallée, France,\\
david.doyen@univ-mlv.fr
}
\date{}
\def\Rset{\mathbb{R}}
\def\demi{\frac{1}{2}}
\newcommand{\E}{\mathcal{E}}
\newtheorem{remark}{Remark}[section]
\newtheorem{prop}{Proposition}[section]
\begin{document}


\maketitle

\begin{abstract}
In this article, we design Asymptotic-Preserving Particle-In-Cell methods for the Vlasov-Maxwell system in the quasi-neutral limit, this limit being characterized by a Debye length negligible compared to the space scale of the problem. These methods are consistent discretizations of the Vlasov-Maxwell system which, in the quasi-neutral limit, remain stable and are consistent with a quasi-neutral model (in this quasi-neutral model, the electric field is computed by means of a generalized Ohm law). The derivation of Asymptotic-Preserving methods is not straightforward since the quasi-neutral model is a singular limit of the Vlasov-Maxwell model. The key step is a reformulation of the Vlasov-Maxwell system which unifies the two models in a single set of equations with a smooth transition from one to another. As demonstrated in various and demanding numerical simulations, the Asymptotic-Preserving methods are able to treat efficiently both quasi-neutral plasmas and non-neutral plasmas, making them particularly well suited for complex problems involving dense plasmas with localized non-neutral regions.
\end{abstract}
\paragraph{Keywords:} Plasma, Debye length, Quasi-neutrality, Vlasov-Maxwell, Asymptotic-Preserving scheme.

\section{Introduction}

In a plasma, the Coulomb interaction between charged particles tends to restore the charge neutrality, while the thermal motion tends to disturb it. These opposing phenomena introduce a typical length of the separation between the electron ($n_e$) and the ion ($n_i$) densities, called the Debye length, and a typical oscillation period of the electrons, called the (electron) plasma period. These parameters depend essentially on the density and the thermal velocity of the particles. When the scales of interest are large compared to the Debye length, the charge separations may be neglected. In other words, the plasma may be assumed quasi-neutral. In that case, the Poisson equation is meaningless for the computation of the electric field\footnote{For the quasi-neutral regime investigated, this also applies to the Maxwell-Amp\`ere equation.} $E$, because this field varies on much larger scales than the charge density. It is thus preferable to assume the charge neutrality (this is the so-called {\it{plasma approximation}}) and to compute the electric field by other means.
Examples of such quasi-neutral descriptions can be found, for instance, in the Langmuir model \cite{fernsler_quasineutral_2005,langmuir_interaction_1929}, the electric field being given by the Boltzmann approximation, or in the Magneto-Hydro-Dynamic (MHD) models \cite{biskamp_nonlinear_1997} and the kinetic quasi-neutral models described in \cite{hewett_low-frequency_1994,Joyce1997,acheritogaray_kinetic_2011}, for which the electric field is provided by an Ohm law. As outlined in \cite[chapter 3]{Chen}, ``{\it{in a plasma it is possible to assume $n_i=n_e$ and $\nabla \cdot E \neq 0$ at the same time}}'', leading to the following guideline: ``{\it{Do not use Poisson's equation to obtain $E$ unless it is unavoidable !}}'' which defines the path followed in the present work.



In many plasma problems, the non-neutral areas are very localized but crucial for the global evolution of the plasma. Furthermore, the extent of these regions evolves with time. That is the case, for instance, of Plasma Opening Switches (POS) \cite{zeng_particle--cell_2001,yulan_particle--cell_2003,richardson_particle--cell_2012} where an electromagnetic wave interacts with a dense plasma, leading to the formation of non-neutral sheaths. Quasi-neutral models are not appropriate to describe all the complex physics that occurs in these non-neutral regions. Non-neutral models, such as the Vlasov-Maxwell model or the Euler-Maxwell model, are not appropriate either because the quasi-neutral limit is a singular limit (some equations degenerate in the quasi-neutral limit). In particular, explicit discretizations of non-neutral models are subject to stability conditions on the mesh size and the time step that are all the more restrictive than the Debye length and the plasma period are small.

A first way to cope with these multi-physics problems is to split the domain into non-neutral and quasi-neutral areas, and to use suitable models in each area (see \cite{schumer_mhd--pic_2001} for example). Another way is to develop an Asymptotic-Preserving (AP) method.  Introduced by S. Jin \cite{Jin99} for
transport in diffusive regimes, the Asymptotic-Preserving methods are specifically designed for singular perturbation problems. For a general presentation of the Asymptotic-Preserving methodology and examples of application, we refer to \cite{degond_asymptotic-preserving_2013}. In the case of plasma descriptions and their quasi-neutral limit, an Asymptotic-Preserving discretization should satisfy the following three properties.
\begin{enumerate}[P1.]
\item It is consistent with a non-neutral model.
\item It remains stable in the quasi-neutral limit.
\item It is consistent with a quasi-neutral model in the quasi-neutral limit.
\end{enumerate}
These properties make the Asymptotic-Preserving methods able to treat efficiently both quasi-neutral and non-neutral areas of a plasma. When the mesh size resolves the Debye length, the Asymptotic-Preserving  methods provide standard discretizations of a non-neutral model; conversely when the Debye length is not resolved, they afford discretizations of a quasi-neutral model. Asymptotic-Preserving discretizations have already been developed for various non-neutral models: the  Euler-Poisson \cite{CDV07}, Euler-Maxwell \cite{DDSprep}, Vlasov-Poisson \cite{BCDS09,DDNSV10} and BGK-Vlasov-Poisson \cite{CDV} models. The purpose of the present article is to carry on the work initiated in these former realizations and to derive Asymptotic-Preserving Particle-In-Cell methods for kinetic description of magnetized plasmas, hence the Vlasov-Maxwell model.

The first step in the design of the Asymptotic-Preserving schemes consists in specifying the quasi-neutral model. With this aim, the Vlasov-Maxwell equations are scaled so that they depend on a single dimensionless parameter $\lambda$ quantifying how close to quasi-neutrality the plasma is. This parameter $\lambda$ represents different diemensionless parameters including the ratio of the Debye length to the space scale of interest and the ratio of the plasma period to the time scale of interest. The scaling assumptions are actually similar to the assumptions used to derive the MHD models, the kinetic description and the finite inertia of the particles being however preserved. In the quasi-neutral model obtained with this scaling, the electric field is computed by means of a generalized Ohm law.

The quasi-neutral model is a singular limit of the Vlasov-Maxwell model. Consequently, a straightforward discretization of the Vlasov-Maxwell model would not be consistent with the quasi-neutral model in the quasi-neutral limit (condition P3). The key idea is to reformulate the Vlasov-Maxwell system in a set of equivalent equations for which the quasi-neutral model is a regular limit. The reformulated system unifies the two models, with a smooth transition from one to another according to the value of the asymptotic parameter $\lambda$. By discretizing the reformulated equations, we would obtain schemes that satisfy automatically conditions P1 and P3. However, it is preferable to start from the standard Vlasov-Maxell equations, which ensures the condition P1, and to use the reformulated equations as a guideline to obtain schemes that meet also the condition P3. This approach allows us to use discretizations with good and well-known properties. The reformulated equations show that the source term of the Maxwell-Amp\`ere equation must be predicted using a discretization of the generalized Ohm law in which the electric field is made implicit. We consider two different discretizations of the generalized Ohm law: the first one, namely the AP-Moment discretization, is an Eulerian approximation; the other, the AP-Particle discretization, relies on a partial Lagrangian approximation (using an advance of the particles). Coupled with a suitable particle pusher, these discretizations prove to be stable in the quasi-neutral limit (condition P2).

Particle-In-Cell (PIC) methods are well documented to lack consistency with the Gauss law, which can be the source of non-physical results in the numerical simulations \cite{Bar,Lan92,BCS07}. There are several ways of enforcing the Gauss law in Particle-In-Cell methods (see \cite{Bar} for a thorough review). In the present work, we focus on the elliptic correction of the electric field, which is a simple and robust procedure, and we adapt it to the Asymptotic-Preserving framework.

The Asymptotic-Preserving methods are semi-implicit and share some similarities with the Direct Implicit \cite{LCF83,CLF82,CLHP89,HeLa87} and Implicit Moment methods \cite{Mas81,BrFo82,WBF86,Mas87,RLB02}. However, the aim of these methods and the way they are derived are quite different. These methods are designed to be free from the usual stability constraints on the explicit methods, in order to study large-scale phenomena, but without the aim to ensure the consistency with a well-identified asymptotic model.

The organization of the paper is the following. In Section~\ref{sec:eq:cont}, we scale the Vlasov-Maxwell system, then identify the quasi-neutral model and finally reformulate the Vlasov-Maxwell in a set of equations for which the quasi-neutral model is a regular limit. These operations are performed first on the standard Vlasov-Maxwell system, then on the Vlasov-Maxwell system with elliptic correction. Section~\ref{sec:numeric} is devoted to the derivation of the AP-Moment and AP-Particle schemes. Reference explicit schemes are also presented to illustrate the defects of the standard discretizations of the Vlasov-Maxwell equations. In Section \ref{eq:survey1}, without the purpose to be exhaustive, this topic being the subject of an active research for many years, the AP schemes are compared with other semi-implicit or implicit PIC methods for the Vlasov-Maxwell equations. In Section \ref{sec:static}, the AP schemes are compared with the AP schemes developed for the Vlasov-Poisson equations in \cite{DDNSV10}. Finally, in Section~\ref{sec:simu}, the AP schemes are tested on various and demanding simulations: the classical Landau damping; the expansion of a plasma slab into vacuum; a one-dimensional model of POS; the propagation of a KMC wave in a two-dimensional model of POS.



\section{The Vlasov-Maxwell system and its quasi-neutral limit}\label{sec:eq:cont}

\subsection{The Vlasov-Maxwell system}

For simplicity, the ions are supposed to form a motionless and uniform background density, denoted by $n_i$. The electron evolution is described using a distribution function $f$ depending on the space variable $x\in \Omega_x \subset  \Rset^3$, the microscopic velocity $v \in \Omega_v \subset \Rset^3$ and the time $t \in \Rset^+$. The electron density $n$, the electrical charge and current densities, $\rho$ and $J$, as well as the stress tensor $S$ are defined from the distribution function by
\begin{equation*}
n=\int_{\Omega_v}f(x,v,t)\ dv\,,\quad \rho=e(n_i-n)\,,\quad J=-e\int_{\Omega_v}f(x,v,t)v\ dv\,, \quad S=\int_{\Omega_v}f(x,v,t)v \otimes v\ dv\,,
\end{equation*}
$e$ denoting the elementary charge.
The distribution function $f$ satisfies the Vlasov equation
\begin{equation}\label{eq:std:vlasov}
\partial_t f + v \cdot \nabla_x f -\frac{e}{m}(E+v \times B)\cdot \nabla_v f =0\,,
\end{equation}
where $m$ is the electron mass, $E$ the electric field, and $B$ the magnetic field. The electric and magnetic fields are created by the particles (self-consistent fields) and satisfy the following Maxwell equations: 
\begin{align}
& \frac{1}{c^2}\partial_t E - \nabla \times B= - \mu_0 J\,,\label{eq:std:Amp}\\
& \partial_t B+ \nabla \times E = 0\,,\label{eq:std:Far}\\
&\nabla \cdot E= \frac{\rho}{\epsilon_0}\,,\label{eq:std:divE}\\
& \nabla \cdot B = 0\,,\label{eq:std:divB}
\end{align}
where $c$ is the speed of light, $\mu_0$ the vacuum permeability and $\epsilon_0$ the vacuum permittivity. The above equations \eqref{eq:std:vlasov}-\eqref{eq:std:divB} form the so-called Vlasov-Maxwell system. Of course, this system must be supplemented with initial and boundary conditions specific to each problem. Several examples of problems with their initial and boundary conditions are presented in Section \ref{sec:simu}, devoted to numerical simulations.

In the Vlasov-Maxwell system, the Maxwell-Gauss equation \eqref{eq:std:divE} (or Gauss law) and the Maxwell-Thomson equation \eqref{eq:std:divB} are actually consequences of the other three equations. The integration of the Vlasov equation \eqref{eq:std:vlasov} over $\Omega_v$ gives the continuity equation
\begin{equation}
  \partial_t \rho + \nabla \cdot J = 0\,,
\end{equation}
which translates the conservation of the electron and ion densities. Combining this continuity equation with the divergence of the Maxwell-Amp\`ere equation \eqref{eq:std:Amp}, we obtain
 \begin{equation}
   \partial_t \left(\nabla \cdot E -\frac{\rho}{\epsilon_0}\right)=0\,.
 \end{equation}
 Taking the divergence of the Maxwell-Faraday equation \eqref{eq:std:Far}, we find
\begin{equation}
 \partial_t \nabla \cdot B =0\,.
\end{equation}
The above two equations show that the Gauss law and Maxwell-Thomson equation hold true for $t>0$ as soon as they hold true at $t=0$.

\subsection{Scaling of the Vlasov-Maxwell system}\label{sec:scaling}
\label{sec-scaling}

Consider a plasma characterized by the following dimensional parameters: $x_0$ for the space scale, $t_0$ for the time scale, $n_0$ (taken equal to $n_i$) for the density, $v_0$ (taken equal to $x_0/t_0$) for the velocity, $v_{th,0}$ for the electron thermal velocity and $(E_0,B_0)$ for the electromagnetic field. The Debye length $\lambda_D$ (the typical length of charge separation) and the (electron) plasma period $\tau_p$ (the typical period of electron oscillations) are defined by:
\begin{equation*}
\lambda_D=\sqrt{\frac{m \epsilon_0 v_{th,0}^2}{e^2 n_0}}\,, \qquad \tau_p=\sqrt{\frac{m \epsilon_0}{e^2 n_0}}\,.
\end{equation*}
As explained in the introduction, a plasma is considered to be quasi-neutral when the Debye length is very small compared to the space scale at which the plasma is observed. Thus, the parameter
\begin{equation*}
\lambda = \frac{\lambda_D}{x_0}
\end{equation*}
quantifies how close to quasi-neutrality the plasma is. 

To investigate the quasi-neutral limit ($\lambda \to 0$), we scale the Vlasov-Maxwell system so that it depends only on the parameter $\lambda$. Using the dimensionless variables
\begin{gather*}
  x^* = \frac{x}{x_0} \,, \quad t^* = \frac{t}{t_0} \,,\quad v^* = \frac{v}{v_0} \,, \quad f^* = \frac{f}{n_0/v_0} \,, \quad n^* = \frac{n}{n_0} \,,\quad J^* = \frac{J}{e n_0 v_0} \,, \\ E^* = \frac{E}{E_0} \,,\quad B^* = \frac{B}{B_0} \,,
\end{gather*}
the Vlasov-Maxwell system can be rewritten as (dropping the stars for the sake of readability): 
\begin{align}
&\partial_t f +  \frac{1}{M} v \cdot \nabla_x f - \eta M (E+ \frac{\beta^2}{M} v \times B)\cdot \nabla_v f =0\,,\label{VMadim:a}\\
& \lambda^2 \eta M^2 (\alpha^2\frac{\partial {E}}{\partial t} - \beta^2 \nabla \times B)=-\alpha^2 J\,,\\
& \beta^2 \partial_t B + \nabla \times E = 0\,,\label{VMadim:c}\\
&\lambda^2 \eta M^2 \nabla \cdot E= 1-n\,,\\
& \nabla \cdot B = 0\,,\label{VMadim:e}
\end{align}
where
\begin{equation*}
M = \frac{v_0}{v_{th,0}}\,, \quad \eta=\frac{e x_0 E_0}{mv_0^2}\,, \quad \alpha=\frac{v_0}{c}\,, \quad \beta=\sqrt{\frac{v_0B_0}{E_0}}\,.
\end{equation*}

The parameter $M$ is the Mach number, $\eta$ the ratio of the electric energy to the drift energy, $\alpha$ the ratio of the typical velocity to the speed of light and $\beta$ the induction electric field to the typical electric field. The scaling relations defining the quasi-neutral regime are very similar to the most common assumptions of MHD models: $\lambda \ll 1$, $\alpha \ll 1$, $M=1$, $\eta=1$, and $\beta=1$. A vanishing dimensionless Debye length $\lambda \to 0$ provides the quasi-neutrality assumption thanks to the Gauss law. The choice $M=\eta=1$ means that the drift energy, the thermal energy and the electric energy remain the same order of magnitude. Furthermore, this choice implies $\tau_p/t_0=\lambda$, which means that the parameter $\lambda$ controls the smallness of the plasma period with respect to the time scale of the problem. Therefore, this quasi-neutral regime is a low frequency asymptotic, which explains the stability issues on the time step encountered by standard explicit discretizations.
The identity $\beta=1$ is related to the so-called ``frozen-field'' assumption, translating the property that, in a dense plasma, the magnetic field is convected with the plasma flow. The quasi-neutral regime is also inconsistent with the propagation of electromagnetic waves at the speed of light, hence $\alpha \ll 1$. This defines another small scale, which is identified to $\lambda$ in the sequel, so that the re-scaled system becomes:
%
\begin{align}
&\partial_t f + v \cdot \nabla_x f - (E+ v \times B)\cdot \nabla_v f =0\,,\label{SVM1}\\
& \lambda^2 \partial_t E - \nabla \times B=-J\,, \label{SVM2}\\
& \partial_t B + \nabla \times E = 0\,, \label{SVM3}\\
&\lambda^2 \nabla \cdot E = 1-n\,,\label{SVM4}\\
& \nabla \cdot B = 0\,.\label{SVM5}
\end{align}

\subsection{The quasi-neutral model}\label{sec:QN:regime}

The scaling of the Vlasov-Maxwell system allows us to identify (formally) the quasi-neutral limit of the Vlasov-Maxwell system. Setting $\lambda=0$ in \eqref{SVM1}-\eqref{SVM5}, we obtain
\begin{align}
&\partial_t f + v \cdot \nabla_x f -(E+v \times B)\cdot \nabla_v f =0\,,\label{QNVM1}\\
& \nabla \times B=J\,,\label{QNVM2}\\
& \partial_t B + \nabla \times E = 0\,,\label{QNVM3}\\
&n= 1\,,\label{QNVM4}\\
& \nabla \cdot B = 0\,.\label{QNVM5}
\end{align}

The singular nature of the quasi-neutral limit appears clearly: both the Maxwell-Amp\`ere equation \eqref{QNVM2} and the Gauss law \eqref{QNVM4} degenerate. The electric field $E$ can no longer be computed explicitly with the degenerate Maxwell-Amp\`ere equation \eqref{QNVM2}. Taking the time derivative of \eqref{QNVM2} together with the curl of \eqref{QNVM3}, we obtain the identity
  \begin{equation}\label{eq:ampere:farraday:0}
    \nabla \times \nabla \times E  = - \partial_t J \,,
  \end{equation}
which provides a means of computing the solenoidal part of the electric field. But the Gauss law, which is degenerate, does not allow us to compute the irrotational part of the electric field. 

While the evolution of the electric field is governed by the displacement current in the Maxwell regime (i.e. when the sources do not dominate), it is given by the particles current in the quasi neutral limit. In this latter regime, an auxiliary equation is required to derive an Ohm law and explicit the dependence of the right hand side of \eqref{eq:ampere:farraday:0} with respect to the electric field. It is thus necessary to use moments of the Vlasov equation (as in \cite{fernsler_quasineutral_2005,CDV07,DDSprep,BCDS09,DDNSV10,HK11}). Multiplying the Vlasov equation by $v$, then integrating over $\Omega_v$, we find the following relation on the current density, which can be viewed as a generalized Ohm law,
\begin{equation}\label{momentJ}
   \partial_t J = \nabla \cdot S + n E - J \times B \,.
\end{equation}
Using this relation together with \eqref{eq:ampere:farraday}, we obtain an equation that determines entirely $E$:
\begin{equation}\label{eq:ampere:QN}
   E + \nabla \times \nabla \times E = J\times B - \nabla \cdot S .
\end{equation}

Finally, provided that $\nabla \times B=J$ at the initial time, the quasi-neutral model can be (formally) rewritten as
\begin{align}
&\partial_t f + v \cdot \nabla_x f -(E+v \times B)\cdot \nabla_v f =0\,,\label{QN1bis}\\
& E  + \nabla \times \nabla \times E = J\times B - \nabla \cdot S\,,\label{QN2bis}\\\
& \partial_t B + \nabla \times E = 0\,,\label{QN3bis}\\
& n = 1\,,\label{QN4bis}\\
& \nabla \cdot B = 0\,.\label{QN5bis}
\end{align}

\begin{remark} The rigorous derivation of the quasi-neutral limit for the Vlasov-Maxwell system, i.e. the convergence of the solutions of the Vlasov-Maxwell system (\ref{SVM1})-(\ref{SVM5}) to a solution of the quasi-neutral Vlasov-Maxwell system \eqref{QNVM1}-\eqref{QNVM5} when $\lambda \rightarrow 0$, is an open problem. Even for the simpler Vlasov-Poisson system, few results exist; see for instance the introduction of \cite{HK11} for a review. 
\end{remark}

\subsection{Reformulation of the Vlasov-Maxwell system}\label{sec-reform}
The purpose of this section is to manufacture a set of equations, equivalent to the original Vlasov-Maxwell system \eqref{SVM1}-\eqref{SVM5}, for which the quasi-neutral limit is regular. In other words, the quasi-neutral model \eqref{QN1bis}-\eqref{QN5bis} must be recovered when $\lambda$ is set to zero in this new set of equations. To obtain such equations, we reproduce on \eqref{SVM1}-\eqref{SVM5} the operations performed in the previous section on \eqref{QNVM1}-\eqref{QNVM5}.
%
%
%

The time derivative of \eqref{SVM2} together with the curl of \eqref{SVM3} yield
\begin{equation}\label{eq:ampere:farraday}
\lambda^2 \partial_t^2  E +  \nabla \times \nabla \times E  = - \partial_t J \,.
\end{equation}
Then, combining the above relation with the generalized Ohm law \eqref{momentJ}, we find
\begin{equation}\label{eq:ampere:QN}
\lambda^2 \partial_t^2  E   + n E + \nabla \times (\nabla \times E)= J \times B - \nabla \cdot S\,.
\end{equation}
Finally, the reformulated Vlasov-Maxwell system is
\begin{align}
&\partial_t f + v \cdot \nabla_x f -( E+v \times B)\cdot \nabla_v f =0\label{SVMbis1}\\
&\lambda^2 \partial_t^2  E   + n E + \nabla \times (\nabla \times E)= J \times B - \nabla \cdot S,\label{SVMbis2}\\
& \partial_t B + \nabla \times E = 0,\\
&\lambda^2 \nabla \cdot E = 1-n\,,\label{SVMbis4}\\
& \nabla \cdot B = 0\,. \label{SVMbis5}
\end{align}
This system is equivalent to the Vlasov-Maxwell system \eqref{SVM1}-\eqref{SVM5} provided that the Maxwell-Amp\`ere equation \eqref{SVM2} is satisfied at the initial time.
\subsection{Enforcement of the Gauss law}\label{sec-enfGauss}

In standard Particle-In-Cell methods, the source terms of the Maxwell equations, namely the discrete charge and current densities, are computed using a particle-to-grid assignment. The grid quantities do not satisfy the discrete equivalent of the continuity equation and thus the Gauss law is not enforced, which can be the source of non-physical results in the numerical simulations, as pointed out in \cite{BiLa04,Bar}. Different procedures are successfully used to correct this deficiency; we refer to \cite{Bar} for a thorough review. Two main approaches can be identified. The first one consists in computing a correction of the electric field, this correction being the solution of an elliptic equation (or, in some variants, a parabolic or hyperbolic equation). The second approach modifies the particle-to-grid assignement so that the charge and current densities satisfy the discrete continuity equation. It is interesting to focus on the first approach, and more specifically on the elliptic correction, because the Gauss law, used to compute the correction, degenerates in the quasi-neutral limit. Therefore an adaptation of the elliptic corection needs to be provided.

A rigorous way to add the elliptic correction  is to consider a generalized formulation of the Maxwell equations in which the Gauss law is explicitly enforced by means of a correction field (a Lagrange multiplier) \cite{BCS07}. The generalized formulation that we use here, also called Boris correction \cite{Bor72}, is slightly different, though equivalent, to the one proposed in \cite{BCS07}. This formulation involves an electric field $\tilde{E}$ which does not satisfy the Gauss law, a corrected electric field $E$ which does satisfy the Gauss law and a correction field $p$:
\begin{align}
& \partial_t f + v \cdot \nabla_x f -( E+v \times B)\cdot \nabla_v f =0\,,\label{VMc1}\\
& \lambda^2 \partial_t \tilde{E} - \nabla \times B=-J\,,\label{VMc2}\\
& \partial_t B + \nabla \times E = 0\,,\label{VMc3}\\
&\lambda^2 \nabla \cdot E = 1-n\,,\label{VMc4}\\
& \nabla \cdot B = 0\,,\label{VMc5}\\
& E = \tilde{E} - \nabla p\,.\label{VMc6}
\end{align}
The boundary conditions prescribed on $p$ are chosen to be compatible with those on $E$. The generalized formulation is well-posed even if the continuity equation is not satisfied at each time $t$ and is equivalent to the standard Maxwell system as soon as the continuity equation is satisfied at each time $t$. Combining \eqref{VMc4} and \eqref{VMc6}, we deduce that $p$ is the solution of the elliptic equation 
\begin{equation}\label{VMc7}
 \lambda^2 \Delta p = \lambda^2 \nabla \cdot \tilde{E} - (1-n) \,.
\end{equation}
When the continuity equation holds true at each time $t$, the generalized equations \eqref{VMc1}-\eqref{VMc6} are equivalent to the standard equations \eqref{SVM1}-\eqref{SVM5}. Indeed, in this case, the electric field $\tilde{E}$  satisfies the Gauss law $\lambda^2 \nabla \cdot \tilde{E} =1-n$. Therefore, $p$ vanishes and $E=\tilde{E}$. 

At the time-discretized level, the generalized formulation provides us with an obvious way to compute an electric field satisfying the Gauss law. First, an electric field is computed using \eqref{VMc2}-\eqref{VMc3}; then, this electric field is corrected using \eqref{VMc6} and \eqref{VMc7}. 

If we add the elliptic correction to the quasi-neutral model, we obtain
\begin{align*}
&\tilde{E} + \nabla \times (\nabla \times \tilde{E})= J \times B - \nabla \cdot S\,,\\
& \partial_t B + \nabla \times E = 0\,,\\
&n=1\,,\\
& \nabla \cdot B = 0\,,\\
& E = \tilde{E} - \nabla p\,.
\end{align*}
The degeneracy of the Gauss law prevents from computing $p$ directly. To obtain an equation on $p$, we need again to use the moments of the Vlasov equation. However, the  continuity equation is assumed to be not exactly satisfied by the moments of the distribution function. Following the spirit of the Boris correction, this unconsistency is corrected in the equation by means of an electrostatic deviation of the electric field, giving rise to the following modified continuity equation
\begin{align}
& \partial_t J = \nabla \cdot S + n ( \tilde{E} - \nabla p) - J \times B \,,\label{moment2}\\
& \partial_t n =  \nabla \cdot J \label{moment3}\,,
\end{align}
Introducing these definitions into the time derivative of the Gauss law yields to
\begin{equation*}
  - \lambda^2 \partial_t^2 \Delta p  - \nabla \cdot (n \nabla p ) = - \lambda^2 \partial_t^2 \nabla \cdot \tilde{E} - \nabla^2:S - \nabla \cdot (n \tilde E  ) + \nabla \cdot (J \times B) \,,
\end{equation*}
which is finally equivalent, assuming that the Gauss law is satisfied at initial times, to
\begin{equation}\label{eq:comput:p}
  - \lambda^2 \partial_t^2 \Delta p  - \nabla \cdot (n \nabla p ) = \partial_t^2 n - \nabla^2:S - \nabla \cdot (n \tilde E  ) + \nabla \cdot (J \times B) \,.
\end{equation}

\begin{remark}\label{remark:corrector}
The source term of this equation is the difference beetween the time evolution of the density produced by the Vlasov equation and that of the electric field divergence predicted thanks to the moments. This point will be clarified with the time semi-discretization introduced in the sequel (see section~\ref{sec:general:framework}). The equation \eqref{eq:comput:p} provides a means of preserving the consistency with the Gauss law whatever the values of $\lambda$ and is therefore AP in the quasi-neutral limit.  
\end{remark}

The quasi-neutral model with correction can thus be stated as follows
\begin{align}
&\partial_t f + v \cdot \nabla_x f -( E+v \times B)\cdot \nabla_v f =0\,,\label{QNcbis1}\\
&\tilde{E} + \nabla \times \nabla \times \tilde{E}= J \times B - \nabla \cdot S\,,\label{QNcbis2}\\
&\partial_t B + \nabla \times E = 0\,,\\
& \Delta p = - \partial_t^2 n + \nabla^2 : S + \nabla \cdot \tilde{E}-\nabla \cdot(J\times B)\,,\label{QNcbis4}\\
& \nabla \cdot B = 0\,,\label{QNcbis5}\\
& E = \tilde{E} - \nabla p \,.\label{QNcbis6}
\end{align}
Similarly, the reformulated Vlasov-Maxwell system with correction can be rewritten as
\begin{align}
&\partial_t f + v \cdot \nabla_x f -( E+v \times B)\cdot \nabla_v f =0\,,\label{RVMc1}\\
&\lambda^2 \partial_t^2  \tilde{E}   + n \tilde{E} + \nabla \times \nabla \times \tilde{E}= J \times B - \nabla \cdot S,\label{RVMc2}\\
& \partial_t B + \nabla \times E = 0,\label{RVMc2}\\
&\lambda^2 \partial_t^2 \Delta p + \nabla \cdot ( n \nabla p) =  -\partial_t^2n+\nabla^2 : S + \nabla \cdot (n \tilde{E})- \nabla \cdot(J\times B) \,,\label{RVMc4}\\
& \nabla \cdot B = 0\,,\label{RVMc5}\\
& E = \tilde{E} - \nabla p\,.\label{RVMc6}
\end{align}

\section{Asymptotic-Preserving schemes}\label{sec:numeric}

Now that the Vlasov-Maxwell system has been scaled and the quasi-neutral model has been identified, we can state rigorously the properties that an Asymptotic-Preserving discretization must satisfy.
\begin{enumerate}[P1.]
\item For $\lambda > 0$, the discretization is consistent with the Vlasov-Maxwell system \eqref{VMc1}-\eqref{VMc6} (or \eqref{RVMc1}-\eqref{RVMc6}).
\item The stability conditions on the time step and the mesh size do not depend on $\lambda$.
\item For $\lambda = 0$, the discretization is consistent with the quasi-neutral model \eqref{QNcbis1}-\eqref{QNcbis6}.
\end{enumerate}

In this section, we derive two Asymptotic-Preserving schemes, called AP-Moment scheme and AP-Particle scheme. They use a Yee finite-difference approximation for the fields \cite{Taf} (with a regular rectilinear grid) and standard assignment-interpolation procedures for the particles, such as the nearest grid point or cloud-in-cell procedures \cite{BiLa04,HoEa88}. Other choices could have been made: finite volumes or finite elements instead of finite differences for instance. The schemes are presented in a three-dimensional spatial setting and, for simplicity, the domain is assumed to be a rectangular parallelepiped with periodic boundary conditions. It is straightforward to derive one-dimensional and two-dimensional versions of the schemes.

We also present two reference explicit schemes. These schemes illustrate the defects of the standard discretizations of the Vlasov-Maxwell equations and will be compared to the Asymptotic-Preserving schemes in the numerical tests. Finally, we derive 

\subsection{Definitions and notation}\label{sec:defnum}

\subsubsection{Discrete fields and discrete vector calculus operators}

We consider different kinds of discrete fields on the grid (which is rectilinear and regular): primal and dual scalar fields, edge scalar field, primal and dual vector fields, primal symmetric second-order tensor field.
\begin{itemize}
\item The values of a primal scalar field are located at the vertices of the cells, while the values of a dual field are located at the centers of the cells. The values of an edge scalar field are located at the center of the edges.
\item The components of a primal vector field are located at the center of the edges: the $x$-, $y$-, and $z$-components are located at the edges oriented in the  $x$-, $y$-, and $z$-direction, respectively. The components of a dual vector field are located at the center of the faces: the $x$-, $y$-, and $z$-components are located at the faces normal to the  $x$-, $y$-, and $z$-direction, respectively.
\item The diagonal components of a primal symmetric second-order tensor field are located at the vertices of the grid. The $xy$-, $xz$- and $yz$-components are located at the center of the faces normal to the $x$-direction, $y$-direction, and $z$-direction, respectively.
\end{itemize}

Discrete differential operators can be defined on the discrete fields defined above by using central finite differences (and assuming periodic boundary conditions). 
\begin{itemize}
\item A discrete curl operator $\nabla_h \times$ is defined for the primal and dual vector fields. When applied to a primal vector field (resp. dual vector field), the discrete curl operator yields a dual vector field (resp. primal vector field). Furthermore, if $F_h$ is a primal vector field and $G_h$ a dual vector field, then
\begin{equation}\label{curlid}
\nabla_h \times F_h \cdot G_h - \nabla_h \times G_h \cdot F_h = 0\,.
\end{equation}
\item A discrete divergence operator $\nabla_h \cdot$ is defined for the primal and dual vector fields. When applied to a primal vector field (resp. a dual vector field), the discrete divergence operator yields a primal scalar (resp. a dual scalar field). If $F_h$ is a primal or a dual field, then
\begin{equation}
\nabla_h \cdot (\nabla_h \times F_h )= 0\,.
\end{equation}
\item A discrete gradient operator $\nabla_h $ is defined for the primal and dual scalar fields. When applied to a primal scalar field (resp. a dual scalar field), the discrete gradient operator yields a primal vector (resp. a dual vector field).
\item A discrete divergence operator $\nabla_h \cdot$ is defined for the primal tensor field. It yields a primal vector field.
\item A discrete cross product $\times_h$ between a primal vector field and a dual vector field is defined. It yields a primal vector field. Such a discrete operator is built using local averages.
\end{itemize}

\subsubsection{Notation}
\begin{itemize}
\item The grid spacings in the $x$-, $y$-, and $z$-direction are denoted by $\Delta x$, $\Delta y$, and $\Delta z$, respectively. Let $h=1/\sqrt{\frac{1}{\Delta x^2} + \frac{1}{\Delta y^2} +\frac{1}{\Delta z^2}}$.
\item The time interval is discretized with a uniform time step $\Delta t$. Let $t^{\gamma}=\gamma \Delta t$, for any $\gamma \in \Rset_+$.
\item The discrete electric and magnetic fields at time $t^\gamma$ are denoted by $E_h^\gamma$ and $B_h^\gamma$. The discrete electric field is a primal vector field while the discrete magnetic field is dual vector field. The discrete correction field, denoted by $p_h$, is a primal scalar field.
\item Let $N$ be the number of particles. The vectors containing the position and the velocity of the particles at time $t^\gamma$ are denoted by $X_N^\gamma$ and $V_N^\gamma$, respectively. The position and velocity vectors of the $j$th particle at time $t^\gamma$ are denoted by $X_{N,j}^\gamma$ and $V_{N,j}^\gamma$, respectively.
\item The value of a field $F_h$ interpolated at the position $X_{N,j}$ is denoted by $F_h(X_{N,j})$. 
\item The discrete electron density accumulated from the particles at position $X_N$ as a primal scalar field (resp. an edge scalar field) is denoted by $n_h(X_N)$ (resp. $\hat{n}_h(X_N)$). The discrete current accumulated from the particles of position $X_N$ and velocity $V_N$ as a primal vector field is denoted by $J_h(X_N,V_N)$. The discrete second-order moment accumulated from the particles of position $X_N$ and velocity $V_N$ as a primal tensor is denoted by $S_h(X_N,V_N)$. 
\end{itemize}

\subsection{Reference explicit schemes}
\label{sec-exp-schemes}

The first reference scheme combines a Boris scheme for advancing the particles and a leap-frog discretization for the Maxwell equations \cite[Chapter 15]{BiLa04}:
\begin{align}
& \frac{X_{N,j}^{m+1}-X_{N,j}^m}{\Delta t}=V_{N,j}^{m+\demi},&& \forall j\in\{1,\dots,N\}\,,\label{scheme-vm-exp1}\\
& \frac{V_{N,j}^{m+\demi}-V_{N,j}^{m-\demi}}{\Delta t}=-E^m_h\left(X_{N,j}^{m}\right)-\frac{V_{N,j}^++V_{N,j}^-}{2} \times B_h^{m}( X_{N,j}^{m}),&& \forall j\in\{1,\dots,N\}\,,\label{scheme-vm-exp2}\\
& \lambda^2 \frac{\tilde{E}^{m+1}_h- {E}^{m}_h}{\Delta t} - \nabla_h \times B^{m+\demi}_h = J_h(X_N^{m+1},V_N^{m+\demi})\,,\label{scheme-vm-exp3}\\
& \frac{B^{m+\demi}_h-B^{m-\demi}_h}{\Delta t} + \nabla_h \times  {E}^{m}_h =0\,,\label{scheme-vm-exp4}\\
& E_h^{m+1}=\tilde{E}_h^{m+1} - \nabla_h p_h\,,\label{Boris1}\\
&\lambda^2 \Delta_h p_h = \lambda^2 \nabla_h \cdot \tilde{E}_h^{m+1}-\left(1-n_h(X_N^{m+1})\right)\,,\label{Boris2}
\end{align}
where
\begin{align}
&V_{N,j}^{+}=V_{N,j}^{m+\demi}-\demi \Delta t E^m_h\left(X_{N,j}^{m}\right),&& \forall j\in\{1,\dots,N\},\\ 
&V_{N,j}^{-}=V_{N,j}^{m-\demi}+\demi\Delta t E^m_h\left(X_{N,j}^{m}\right),&& \forall j\in\{1,\dots,N\},\\ &B_h^{m}=\demi\left(B_h^{m-\demi}+B_h^{m+\demi}\right).\label{scheme-vm-exp6}
\end{align}
The above equations are solved in the order (\ref{scheme-vm-exp4}), (\ref{scheme-vm-exp2}), (\ref{scheme-vm-exp1}), (\ref{scheme-vm-exp3}), \eqref{Boris2}, \eqref{Boris1}, so that the scheme is fully explicit. This scheme is subject to a number of stability conditions that are given below (between brackets, the conditions are expressed with dimensional parameters and without the scaling assumptions). The time step must resolve the plasma period:
\begin{equation}\label{cst-plasma}
\Delta t < 2 \lambda  \qquad \left[\Delta t < 2 \tau_p\right]\,.
\end{equation}
The grid spacing must resolve the Debye length:
\begin{equation}\label{cst-Debye}
h < \zeta \lambda \qquad \left[h < \zeta \lambda_D\right]\,,
\end{equation}
where $\zeta$ is a parameter depending on the assignment-interpolation procedure. Otherwise, aliasing will heat up the plasma. Furthermore, the time step and the grid spacing must satisfy a Courant condition involving the velocity of the electrons,
\begin{equation}\label{cst-particle-velocity}
\frac{\Delta t}{h} < 1\qquad \left[v_{th,0}\frac{\Delta t}{h} < 1 \right],
\end{equation}
and and another one involving the speed of light,
\begin{equation}\label{cst-light}
\lambda\frac{\Delta t}{h} < 1 \qquad \left[c\frac{\Delta t}{h} < 1\right].
\end{equation}
The first three constraints are due to the explicit discretization of the particle motion, while the last constraint is due to the explicit discretization of the Maxwell equations. Note that the constant in the right-hand side of the Courant condition \eqref{cst-light} is specific to the Yee finite-difference discretization \cite{Taf}. Because of the stability conditions (\ref{cst-plasma}) and (\ref{cst-Debye}), the above scheme is unstable in the quasi-neutral limit. Therefore, it does not satisfy the condition P2, and hence is not Asymptotic-Preserving.

In the second reference scheme, the leap-frog discretization of the Maxwell equations \eqref{scheme-vm-exp3}-\eqref{scheme-vm-exp4} is replaced by an implicit $\theta$-scheme (with $\demi \leq \theta \leq 1$) but the sources remain explicit:
\begin{align}
& \lambda^2 \frac{\tilde{E}^{m+1}_h-{E}^{m}_h}{\Delta t} - \nabla_h \times \bar{B}^{m+\theta}_h = - J_h(X_N^{m+1},V_N^{m+\demi})\,,\label{scheme-vm-impexp3}\\
& \frac{B^{m+1}_h-B^{m}_h}{\Delta t} + \nabla_h \times \bar{E}^{m+\theta}_h =0\,,\label{scheme-vm-impexp4}
\end{align}
with 
\begin{equation*}
\bar{E}^{m+\theta}_h = \theta \tilde{E}^{m+1}_h + (1-\theta) {E}^{m}_h \,, \qquad \bar{B}^{m+\theta}_h = \theta B^{m+1}_h + (1-\theta) {B}^{m}_h\,.
\end{equation*}
The $\theta$-scheme is unconditionally stable for  $\demi \leq \theta \leq 1$. Therefore, this second scheme is not subject to the stability condition \eqref{cst-light}. However, it is still subject to the conditions (\ref{cst-plasma}) and (\ref{cst-Debye}), and hence is not Asymptotic-Preserving. The computation of $\tilde{E}^{m+1}_h$ and $B^{m+1}_h$ requires the solution of a linear system, which makes this scheme slightly more costly than the first one. Other implicit schemes than the $\theta$-scheme could have been used; see for instance \cite{AGNR80,Bow01,CDL08}. The properties of the $\theta$-scheme (stability, energy conservation, dispersion) are recalled in \ref{sec-prop-theta}. 

\subsection{Asymptotic-Preserving schemes}

\subsubsection{General framework}\label{sec:general:framework}

To derive AP schemes, it is preferable to start from the standard Vlasov-Maxell equations \eqref{VMc1}-\eqref{VMc6} than from the reformulated equations \eqref{RVMc1}-\eqref{RVMc6}. It allows us to use discretizations with good and well-known properties. The reformulation operations described in Sections \ref{sec-reform} and \ref{sec-enfGauss} are however used as a guideline to obtain schemes consistent with the quasi-neutral model. In this section, we derive the common general structure of two AP schemes, called AP-Moment and AP-Particle schemes. Their specificities are addressed in the next two sections.

1. First, the Maxwell equations are discretized using a $\theta$-scheme (with $\demi \leq \theta \leq 1$):
\begin{align}
& \lambda^2 \frac{\tilde{E}^{m+1}_h-E^m_h}{\Delta t} - \nabla_h \times \bar{B}^{m+\theta}_h = - \tilde{J}_h^{m+1}\,,\label{AP1}\\
& \frac{B^{m+1}_h-B^{m}_h}{\Delta t} + \nabla_h \times \bar{E}^{m+\theta}_h =0\,,\label{AP2}
\end{align}
where
\begin{equation*}
\bar{E}^{m+\theta}_h = \theta \tilde{E}^{m+1}_h + (1-\theta) {E}^{m}_h \,, \qquad \bar{B}^{m+\theta}_h = \theta B^{m+1}_h + (1-\theta) {B}^{m}_h\,.
\end{equation*}
An implicit discretization of the Maxwell equations is needed to avoid a Courant condition similar to \eqref{cst-light}. The current $\tilde{J}_h^{m+1}$ used as source term in the discrete Maxwell-Amp\`ere equation \eqref{AP1} is defined by an approximation of the generalized Ohm law \eqref{momentJ}. It is crucial to make the electric field implicit in this approximation to ensure the consistency with the quasi-neutral model. We consider two types of approximation. The first one, called AP-Moment, is based on an Eulerian integration. The second one, called AP-Particle, relies on a partial Lagrangian approximation (using an advance of the particles). In both approximations, the current $\tilde{J}_h^{m+1}$ can be written in the form
\begin{equation}
\tilde{J}_h^{m+1}= {J}_h^{m+1,\star} +  \Delta t \, \hat{n}_h(X_N^m)  \tilde{E}^{m+1}_h\,,\label{AP3}
\end{equation}
where ${J}_h^{m+1,\star}$ is a quantity depending on the scheme. Note that $\tilde{J}_h^{m+1}$ and ${J}_h^{m+1,\star}$ are primal vector fields.

The equations (\ref{AP1})-(\ref{AP3}) are equivalent to the linear system
\begin{equation}  \label{AhZh}
A^m_h \begin{pmatrix} \tilde{E}^{m+1}_h \\ B^{m+1}_h \end{pmatrix} = \begin{pmatrix}
\displaystyle \frac{\lambda^2}{\Delta t^2}E^{m}_h  + \frac{1-\theta}{\Delta t}\nabla_h \times B^{m}_h -\frac{1}{\Delta t}{J}_h^{m+1,\star} \\
\displaystyle \frac{1}{\Delta t^2}B^{m}_h - \frac{1-\theta}{\Delta t}\nabla_h \times E^{m}_h\\
\end{pmatrix}\,,
\end{equation}
where $A^m_h$ is a linear operator defined by
\begin{equation}\label{Ah}
A^m_h \begin{pmatrix} E_h \\ B_h \end{pmatrix}{:=}\begin{pmatrix}
\displaystyle \Big(\frac{\lambda^2}{\Delta t^2}+ \hat{n}_h(X_N^m) \Big) E_h - \frac{\theta}{\Delta t} \nabla_h \times B_h\\
\displaystyle  \frac{1}{\Delta t^2} B_h +\frac{\theta}{\Delta t} \nabla_h \times E_h \\
\end{pmatrix}\,,
\end{equation}
for any fields $E_h$ and $B_h$.
In practice, the electric field $\tilde{E}^{m+1}_h$ and the magnetic field $B^{m+1}_h$ are computed by solving this linear system. It is invertible and well-conditionned even when $\lambda \ll \Delta t$ and $\lambda \ll h$. Indeed, using the identity (\ref{curlid}), we obtain the inequality
\begin{equation*}
\begin{pmatrix} E_h \\ B_h \end{pmatrix}^T A^m_h \begin{pmatrix} E_h \\ B_h \end{pmatrix}\geq \left(\frac{\lambda^2}{\Delta t^2} + \min \big(\hat{n}_h(X_N^m)\big)\right) \|E_h\|^2 +  \frac{1}{\Delta t^2} \|B_h\|^2.
\end{equation*}
When $\lambda \ll \Delta t$ and $\lambda \ll h$, the density $\hat{n}_h(X_N^m)$ is expected to be close to 1 (the charge separation is negligible at scales far larger than $\lambda$). Therefore, the quantity $(\lambda^2/\Delta t^2) +\min \big(\hat{n}_h(X_N^m)\big)$ is close to 1.

2. In a second step, the electric field is corrected. Equations \eqref{VMc4} and \eqref{VMc6} are discretized as follows:
\begin{align}
& \lambda^2 \nabla_h \cdot E^{m+1}_h = 1 - n^{m+1}_h \,, \label{AP4}\\
& E^{m+1}_h = \tilde{E}^{m+1}_h - \nabla_h p_h\,.\label{AP5}
\end{align}
The charge density $n^{m+1}_h$ used as source term in \eqref{AP4} is built using an approximation of the moment equations \eqref{moment2} and \eqref{moment3}:
\begin{align}
& n_h^{m+1} = {n}_h(X_N^{m}) + \Delta t \nabla_h \cdot J_h^{m+1} \,,\label{AP6}\\
& J_h^{m+1}= J_h^{m+1,\star} + \Delta t \, \hat{n}_h(X_N^m) \; E^{m+1}_h \,.\label{AP7}
\end{align}
In the above equations, $n_h^{m+1}$ is a primal scalar field and $J_h^{m+1}$ is a primal vector field. It is essential to make the electric field implicit in this approximation to guarantee the consistency with the quasi-neutral model.

From \eqref{AP4}-\eqref{AP7}, we deduce that the correction $p_h$ is solution of the equation
\begin{equation}\label{APcorr1}
-\nabla_h \cdot \left(\left(\frac{\lambda^2}{\Delta t ^2} + \hat{n}_h(X_N^{m})\right) \nabla_h p_h\right) = \frac{1-n_h(X_N^{m})}{\Delta t^2} -\nabla_h \cdot \left(\left(\frac{\lambda^2}{\Delta t ^2} + \hat{n}_h(X_N^{m})\right) \tilde{E}_h^{m+1}\right) - \Delta t \nabla_h \cdot J_h^{m+1,\star}\,.
\end{equation}
Just as the linear system \eqref{AhZh}, the above linear system \eqref{APcorr1} is invertible and well-conditionned even when $\lambda \ll \Delta t$ and $\lambda \ll h$. Using the divergence of \eqref{AP1}, it can be rewritten in a simpler form:
\begin{equation}\label{APcorr2}
-\nabla_h \cdot \left(\left(\frac{\lambda^2}{\Delta t ^2} + \hat{n}_h(X_N^{m})\right) \nabla_h p_h\right) = \frac{1-n_h(X_N^{m})}{\Delta t^2} - \frac{\lambda^2}{\Delta t^2}\nabla_h \cdot E_h^{m}.
\end{equation}
Note that the right-hand side of \eqref{APcorr2} evaluates the inconsistency of the Gauss law at time $t^m$ which amounts to the difference beetween the density accumulated from the particles and the electric field implicitly predicted at the previous time step. In practice, the correction field $p_h$ is computed with \eqref{APcorr2}, then the electric field is corrected with \eqref{AP5}.

3. Finally, the particles are advanced with a Boris-like scheme:
\begin{align}
&\frac{X_{N,j}^{m+1}-X_{N,j}^m}{\Delta t}=V_{N,j}^{m+1},&& \forall j\in\{1,\dots,N\},\label{AP8}\\
& \frac{V_{N,j}^{m+1}-V_{N,j}^{m}}{\Delta t}=- E^{m+1}_h\left(X_{N,j}^{m}\right)-\frac{V_{N,j}^++V_{N,j}^-}{2} \times B_h^{m}( X_{N,j}^{m}),&& \forall j\in\{1,\dots,N\},\label{AP9}
\end{align}
where
\begin{align}
&V_{N,j}^{+}=V_{N,j}^{m+1}-\demi \Delta t E^{m+1}_h\left(X_{N,j}^{m}\right),&& \forall j\in\{1,\dots,N\},\\ 
&V_{N,j}^{-}=V_{N,j}^{m}+\demi \Delta t E^{m+1}_h\left(X_{N,j}^{m}\right),&& \forall j\in\{1,\dots,N\}.\label{AP11}
\end{align}
With this particle pusher, the scheme is subject to the Courant condition \eqref{cst-particle-velocity}. However it is not subject to conditions (\ref{cst-plasma}) and \eqref{cst-Debye} and thus satisfies Property P2. Moreover, it presents favorable conservation properties. The choice of the particle pusher is discussed in more detail in \ref{sec:Part:Pusher}.



\subsubsection{AP-Moment scheme}\label{SecAPm}

The AP-Moment Scheme relies on the following approximation of the generalized Ohm law \eqref{momentJ}:
\begin{equation}\label{APm-1}
\frac{\tilde{J}_h^{m+1}-J_h(X_N^m,V_N^m)}{\Delta t} - \nabla_h \cdot S_h(X_N^m,V_N^m) - \hat{n}_h(X_N^m) \tilde{E}^{m+1}_h +  J_h(X_N^m,V_N^m) \times_h B^m_h =0\,.
\end{equation}
This amounts to set
\begin{equation}\label{eq:def:Jstar}
  J_h^{m+1,\star} = J_h(X_N^m,V_N^m) + \Delta t \Big( \nabla_h \cdot S_h(X_N^m,V_N^m) -  J_h(X_N^m,V_N^m) \times_h B^m_h \Big)\,.
\end{equation}

\begin{prop}\label{prop:consistancy}
 The AP-Moment Scheme, defined by \eqref{AP1}--\eqref{AP3}, \eqref{AP5}, \eqref{APcorr2}, \eqref{AP8}--\eqref{AP11} and \eqref{eq:def:Jstar}, is consistent with the Vlasov-Maxwell system when $\lambda > 0$. It is consistent with the quasi-neutral model \eqref{QNcbis1}-\eqref{QNcbis6} when $\lambda = 0$.
\end{prop}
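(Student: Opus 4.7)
The plan is to verify consistency by formal Taylor expansion, equation by equation, treating the two regimes $\lambda>0$ and $\lambda=0$ separately. Since the scheme is built directly from the standard equations \eqref{VMc1}--\eqref{VMc6}, with the reformulated system of Section~\ref{sec-reform} used only as a guideline, the first claim should fall out quickly; the subtle point is the $\lambda=0$ case, where one must check that the carefully designed implicit treatment of the current closure makes the limit well-defined rather than degenerate.

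For the regime $\lambda>0$, I would work equation by equation. The Faraday discretization \eqref{AP2} is the classical $\theta$-scheme, consistent with \eqref{VMc3}, and the Boris-like pusher \eqref{AP8}--\eqref{AP11} is a standard discretization of the Vlasov characteristics of \eqref{VMc1}. For the Amp\`ere equation \eqref{AP1}, substituting \eqref{AP3} together with \eqref{eq:def:Jstar} shows that $\tilde J_h^{m+1}$ is a first-order Euler prediction of $J$ at time $t^{m+1}$ through the moment identity \eqref{momentJ}, so $\tilde J_h^{m+1} = J^{m+1} + O(\Delta t)$ and \eqref{AP1} is consistent with \eqref{VMc2}. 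The corrector block \eqref{AP5}, \eqref{APcorr2} is obtained by eliminating $n_h^{m+1}$ and $J_h^{m+1}$ from \eqref{AP4}, \eqref{AP6}, \eqref{AP7} via \eqref{AP5}; by construction it is a discretization of \eqref{VMc4}--\eqref{VMc6}.

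For the limit $\lambda=0$, the plan is to set $\lambda=0$ in each discrete equation and identify the resulting system as a consistent discretization of \eqref{QNcbis1}--\eqref{QNcbis6}. Faraday, the pusher and the density update go through unchanged. Equation \eqref{AP1} reduces to $\nabla_h\times \bar B^{m+\theta}_h = \tilde J_h^{m+1}$; using Faraday to substitute $\bar B^{m+\theta}_h = B_h^m - \theta\Delta t\,\nabla_h\times \bar E^{m+\theta}_h$ and expanding $\tilde J_h^{m+1}$ by \eqref{AP3}, \eqref{eq:def:Jstar}, one obtains after dividing by $\Delta t$
\[
\frac{\nabla_h\times B_h^m - J_h(X_N^m,V_N^m)}{\Delta t} = \theta\,\nabla_h\times\nabla_h\times\bar E^{m+\theta}_h + \hat n_h(X_N^m)\,\tilde E_h^{m+1} + \nabla_h\cdot S_h - J_h\times_h B_h^m.
\]
Using the initialization $\nabla_h\times B_h^0 = J_h^0$ and an induction on $m$, the left-hand side is a consistent approximation of $\partial_t(\nabla\times B - J)$ evaluated in a state where Amp\`ere's law $\nabla\times B = J$ holds; the right-hand side then reproduces \eqref{QNcbis2} in the limit $\Delta t\to 0$, since $\hat n_h\to 1$ in the quasi-neutral regime and $\bar E^{m+\theta}_h$, $\tilde E_h^{m+1}$ collapse to the same field. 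The elliptic corrector \eqref{APcorr2} with $\lambda=0$ is treated in the same spirit and yields a consistent discretization of \eqref{QNcbis4}.

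The main obstacle is the $\lambda=0$ analysis of the coupled Amp\`ere--Faraday block: one has to combine three discrete equations, invoke the initialization $\nabla_h\times B_h^0 = J_h^0$ together with an induction on $m$, and verify that the $1/\Delta t$ factor that appears after elimination is legitimate. Everything else is essentially routine Taylor bookkeeping; the essential design choice already built into \eqref{eq:def:Jstar} is to treat $\hat n_h\,\tilde E_h^{m+1}$ implicitly, which turns \eqref{AP1} into a discrete counterpart of the reformulated equation \eqref{SVMbis2}, whose $\lambda=0$ regular limit is precisely \eqref{QNcbis2}.
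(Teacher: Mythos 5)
Your argument is correct at the formal level at which the paper itself operates, and its core algebraic step --- eliminate $B^{m+1}_h$ from \eqref{AP1} using Faraday \eqref{AP2}, expand $\tilde J^{m+1}_h$ via \eqref{eq:def:Jstar}, and dispose of the residual $\nabla_h\times B^m_h - J_h(X_N^m,V_N^m)$ by appealing to the Amp\`ere law at the previous step --- is exactly the paper's. The organization differs, though. The paper performs a single computation valid for every $\lambda\ge 0$: it shows the scheme is a consistent discretization of the \emph{reformulated} Amp\`ere law \eqref{SVMbis2} and of the reformulated corrector equation \eqref{RVMc4}, by identifying the previous-step residual exactly as $\frac{\lambda^2}{\Delta t}(E^m_h-E^{m-1}_h)$, so that it recombines with the $\frac{\lambda^2}{\Delta t^2}(\tilde E^{m+1}_h-E^m_h)$ term into the discrete second time derivative $\frac{\lambda^2}{\Delta t^2}(\tilde E^{m+1}_h-2E^m_h+E^{m-1}_h)$. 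Since the reformulated system has a regular limit, both halves of the proposition follow from this one identity. You instead split the two regimes and, at $\lambda=0$, argue that the $\frac{1}{\Delta t}\bigl(\nabla_h\times B^m_h-J_h(X^m_N,V^m_N)\bigr)$ term is negligible by induction from the initialization; this is the one place where your write-up is weaker than the paper's. As stated, the residual is not literally zero at step $m$ --- the discrete Amp\`ere law at $\lambda=0$ gives $\nabla_h\times \bar B^{m-1+\theta}_h=\tilde J^m_h$, where $\tilde J^m_h$ is the \emph{predicted} current, not the current accumulated from the pushed particles --- so after dividing by $\Delta t$ you need the mismatch to be $o(\Delta t)$, not merely "approximately zero," and your description of the left-hand side as "a consistent approximation of $\partial_t(\nabla\times B-J)$" is not quite what is needed (it is $(\nabla\times B-J)/\Delta t$ at a single time level). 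The paper's uniform-in-$\lambda$ bookkeeping sidesteps this by never dividing out a quantity that is only formally small. Finally, you only sketch the corrector block, whereas the paper carries out the analogous computation for \eqref{APcorr2}, assuming the Gauss law and the continuity equation at the previous step, to recover \eqref{RVMc4} and hence \eqref{QNcbis4} and \eqref{QNcbis2} at $\lambda=0$.
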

\begin{proof}
For simplicity, we fix $\theta=1$. First, the system \eqref{AhZh}-\eqref{Ah} yields
\begin{multline}\label{eq:proof:consistancy}
  {\frac{\lambda^2}{\Delta t^2} \left( \tilde{E}^{m+1}_h - {E}^{m}_h \right) = \frac{1}{\Delta t}\Big(\nabla_h \times B_h^m - J^m_h(X_N^m,V_N^m) \Big)}\\ - {\nabla_h \times \nabla_h \times {\tilde{E}^{m+1}_h} -\hat{n}_h(X_N^m)\tilde{E}^{m+1}_h - \nabla_h\cdot S^m_h + J^m_h(X_N^m,V_N^m) \times B^m_h \,.}
\end{multline}
The Amp\`ere law is assumed to be satisfied at the precedent time step, with
\begin{equation*}
  \nabla_h \times B^m_h - J^m_h(X_N^m,V_N^m)  \approx \frac{\lambda^2}{\Delta t} \left(E^m_h - E^{m-1}_h \right) \,,
\end{equation*}
so that the following approximation holds
\begin{equation*}
  \frac{\lambda^2}{\Delta t^2} \left(\tilde{E}^{m+1}_h - 2 {E}^{m}_h + E^{m-1}_h\right) + \nabla_h \times \nabla_h \times\tilde{E}^{m+1}_h + \hat{n}_h(X_N^m)\tilde{E}^{m+1}_h + \nabla_h\cdot S^m_h - J^m_h(X_N^m,V_N^m) \times B^m_h \approx 0 \,,
\end{equation*}
defining a consistent discretization of the reformulated Amp\`ere law \eqref{SVMbis1}.

Next, the equations \eqref{AP4}-\eqref{AP7} and \eqref{eq:proof:consistancy} provide 
\begin{multline*}
  -\nabla_h \cdot \Big(\big(\frac{\lambda^2}{\Delta t ^2} +  \hat{n}_h(X_N^{m})\big) \nabla_h p_h\Big) =  \frac{1}{\Delta t^2} \Big( 1 -n_h^{m+1} - \lambda^2 \nabla_h \cdot E^{m}_h \Big) \\ +\nabla_h \cdot  J^m_h(X_N^m,V_N^m) + \nabla_h^2 : S^m - \nabla_h\cdot ( J_h^m \times B^m ) - \nabla_h \cdot  ( \hat{n}_h(X_N^m)\tilde{E}^{m+1}_h ) \,.
\end{multline*}
Assuming that the Gauss law as well as the continuity equations are satisfied at the previous time step so that the following approximations hold
\begin{align*}
  \lambda^2 \nabla_h \cdot E^m_h &\approx 1 -  {n}_h(X_N^{m})\,,\\
  \Delta t\nabla_h \cdot  J^m_h(X_N^m,V_N^m) &\approx  {n}_h(X_N^{m}) - {n}_h(X_N^{m-1})\,, \\
\end{align*}
the following equation can be stated
\begin{multline*}
  -\nabla_h \cdot \Big(\big(\frac{\lambda^2}{\Delta t ^2} +\hat{n}_h(X_N^{m})\big) \nabla_h p_h\Big) \approx \frac{1}{\Delta t^2} \Big( -n_h^{m+1} +2 {n}_h(X_N^{m}) - {n}_h(X_N^{m-1}) \Big) \\- \nabla_h^2 : S^m - \nabla_h\cdot ( J_h^m \times B^m ) - \nabla_h \cdot  ( \hat{n}_h(X_N^m)\tilde{E}^{m+1}_h) \,,
\end{multline*}
which defines a time discretization of the reformulated Gauss law \eqref{RVMc4} provided that the correction at time level $m$ and $m-1$ vanish.

\end{proof}
\subsubsection{AP-Particle scheme}
\label{SecAPp}
In the AP-Particle scheme, the contributions of $J\times B$ and $\nabla \cdot S$ in the generalized Ohm law \eqref{momentJ} are approximated with a particle advance. Observing that 
\begin{equation}
\nabla \cdot S - J\times B =\int_{\Omega_v} \left( v \cdot \nabla_x f - ( v \times B)\cdot \nabla_v f\right)v\ dv \,,
\end{equation}
we set
\begin{equation}\label{eq:def:Jstar:P}
J^{m+1,\star}_h = J_h({X}^{m+1,\star}_N,{V}_N^{m+1,\star}) \,,
\end{equation}
with
\begin{align}
&\frac{{X}_{N,j}^{m+1,\star}-X_{N,j}^m}{\Delta t}={V}_{N,j}^{m+1,\star},&& \forall j\in\{1,\dots,N\},\label{APp-tilde1}\\
& \frac{{V}_{N,j}^{m+1,\star}-V_{N,j}^m}{\Delta t}= - V_{N,j}^{m} \times B_h^{m}(X_{N,j}^{m}),&& \forall j\in\{1,\dots,N\}.\label{APp-tilde2}
\end{align}
Compared to the AP-Moment scheme, the AP-Particle scheme requires an additional push of the particles, from the state $(X_{N}^{m},V_{N}^{m})$ to the state  $(X_{N}^{m+1,\star},V_{N}^{m+1,\star})$. However, it avoids the computation of the tensor $S_h$. 

\begin{prop}
The AP-Particle Scheme, defined by \eqref{AP1}--\eqref{AP3}, \eqref{AP5}, \eqref{APcorr2}, \eqref{AP8}--\eqref{AP11} and \eqref{eq:def:Jstar:P}--\eqref{APp-tilde2}, is consistent with the Vlasov-Maxwell system when $\lambda > 0$. It is consistent with the quasi-neutral model \eqref{QNcbis1}-\eqref{QNcbis6} when $\lambda = 0$.
\end{prop}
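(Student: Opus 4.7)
The plan is to reduce the claim to Proposition~\ref{prop:consistancy}. The AP-Moment and AP-Particle schemes share every ingredient --- the Maxwell time stepping \eqref{AP1}--\eqref{AP3}, the elliptic correction \eqref{AP5}, \eqref{APcorr2}, and the Boris-like particle pusher \eqref{AP8}--\eqref{AP11} --- and differ only in the definition of the predicted current $J_h^{m+1,\star}$. It therefore suffices to prove that the AP-Particle definition \eqref{eq:def:Jstar:P}, \eqref{APp-tilde1}--\eqref{APp-tilde2} agrees with the AP-Moment definition \eqref{eq:def:Jstar} up to $O(\Delta t^2)$; once this is established, every step of the consistency analysis carried out in the proof of Proposition~\ref{prop:consistancy} transfers verbatim, modulo consistency errors that vanish with $\Delta t$.

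The core computation will be a Taylor expansion of the auxiliary push. Equation \eqref{APp-tilde2} gives $V_{N,j}^{m+1,\star} = V_{N,j}^m - \Delta t\, V_{N,j}^m \times B_h^m(X_{N,j}^m)$ exactly, and \eqref{APp-tilde1} then yields $X_{N,j}^{m+1,\star} = X_{N,j}^m + \Delta t\, V_{N,j}^m + O(\Delta t^2)$. Substituting these expansions into the particle-to-grid current $J_h(X_N^{m+1,\star}, V_N^{m+1,\star})$ and expanding the smooth assignment function to first order, I expect to obtain
\begin{equation*}
J_h(X_N^{m+1,\star}, V_N^{m+1,\star}) = J_h(X_N^m, V_N^m) + \Delta t\, \nabla_h \cdot S_h(X_N^m, V_N^m) - \Delta t\, J_h(X_N^m, V_N^m) \times_h B_h^m + O(\Delta t^2),
\end{equation*}
where the discrete divergence of the stress tensor is generated by the positional shift of the particles, and the discrete cross product by the velocity shift. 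Up to $O(\Delta t^2)$, this reproduces exactly the expression \eqref{eq:def:Jstar} used by the AP-Moment scheme.

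With this identification, the remaining consistency argument repeats that of Proposition~\ref{prop:consistancy}: injecting $\tilde J_h^{m+1}$ into the linear system \eqref{AhZh}, using the Amp\`ere law at the previous time step, and rewriting the corrector equation \eqref{APcorr2} yield, for $\lambda > 0$, consistent discretizations of the reformulated Amp\`ere law \eqref{SVMbis2} and Gauss law \eqref{RVMc4}; setting $\lambda = 0$ then produces consistency with \eqref{QNcbis2} and with the corresponding corrector equation of the quasi-neutral model \eqref{QNcbis1}--\eqref{QNcbis6}.

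I anticipate the main obstacle to be making the discrete Taylor expansion match the grid-level operators exactly. At the continuous level the identification is elementary, but at the fully discrete level one must verify that the first-order variations of the particle-to-grid assignment indeed reproduce the operators $\nabla_h \cdot S_h$ and $\times_h$ defined in Section~\ref{sec:defnum}: the local averaging implicit in evaluating $B_h^m$ at a particle position has to agree with the averaging built into the discrete cross product $\times_h$, and the first-order variation of the shape function has to coincide with the discrete divergence applied to the particle stress tensor. These compatibilities hold for standard NGP and CIC assignments combined with the averaged discrete cross product of Section~\ref{sec:defnum}, but the detailed bookkeeping is where the argument requires the most care.
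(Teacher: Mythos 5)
Your proposal is correct and follows essentially the same route as the paper: the paper's proof consists precisely of asserting that, for a sufficiently large number of particles, $J_h(X_N^{m+1,\star},V_N^{m+1,\star})$ agrees with the AP-Moment predictor \eqref{eq:def:Jstar} to first order in $\Delta t$, and then transferring the consistency analysis of Proposition~\ref{prop:consistancy} verbatim. Your Taylor expansion even gets the signs right --- matching \eqref{eq:def:Jstar} --- whereas the approximation \eqref{APp-current-bis} displayed in the paper has the signs of the $\nabla_h\cdot S_h$ and $J_h\times_h B^m_h$ contributions flipped, which appears to be a typographical error.
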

\begin{proof}
When the number of particles is sufficiently large, the following appoximation is valid:
\begin{equation}\label{APp-current-bis}
J_h({X}_N^{m+1,\star},{V}_N^{m+1,\star}) \approx J_h(X_N^m,V_N^m)- \Delta t\ \nabla_h \cdot S_h(X_N^m,V_N^m) + \Delta t  J_h(X_N^m,V_N^m) \times B^m_h\,.
\end{equation}
This proves that the AP-Particle scheme shares the same consistency properties than the AP-Moment scheme.
\end{proof}

\section{Comparison with other implicit PIC methods}\label{eq:survey}

\subsection{Fully implicit, Direct Implicit and Implicit Moment methods}\label{eq:survey1}

Some level of implicitness is required in the discretization of the Vlasov-Maxwell equations to obtain AP schemes. It is needed to ensure the consistency with the quasi-neutral model and the stability in the quasi-neutral limit. Other implicit kinetic methods have been developed since the 1980s. Designed for the simulation of large-scale phenomena, they are derived to relax the main stability conditions that explicit methods must satisfy. We refer to \cite{lapenta_particle_2012} for a recent review. In this section, we discuss the AP character of the three main classes of implicit methods (the fully implicit methods, the Implicit Moment methods and the Direct Implicit methods) and compare them with the AP-Moment and AP-Particle schemes.


Theoretically, a fully implicit discretization of the Vlasov-Maxwell system is stable for any discretization parameters. The resulting problem is a huge system of coupled nonlinear equations (the particle equations and the field equations). Impressive realizations have been achieved in the past few years with the use of Jacobian-Free-Newton-Krylov solvers, preconditioning, optimized and often massively parallel implementations \cite{chen_energy-_2011,chen_efficient_2012,chen_fluid_2014,markidis_energy_2011,taitano_development_2013}. Nevertheless, fully implicit discretizations are still too costly for multi-dimensional simulations. In fully implicit discretizations, the electric field at the advanced time level $E_h^{n+1}$ occurs in the discretization of the terms ${\partial E}/{\partial t}$ and $\nabla \times E$ in the Maxwell equations and is also involved in the definition of the source terms, via the particle equations. Therefore the consistency with the quasi-neutral model is recovered when the coupled system is solved.

 
The Implicit Moment methods \cite{Mas81,BrFo82,WBF86,Mas87,VuBr92,RLB02,lapenta_kinetic_2006,markidis_multi-scale_2010} are semi-implicit methods. They decouple the field equations from the particle equations by using macroscopic evolution equations on $\rho$ and $J$ to predict the sources of the field equations at the advanced time level $\rho_h^{n+1}$ and $J_h^{n+1}$. This approach reduces dramatically the computational cost compared to fully implicit methods, while retaining favorable stability properties. The use of moment equations to predict the sources at the advanced time level is a common point with the AP-Moment and AP-Particle schemes. In particular, the discretization of these moment equations is very similar in the Implicit Moment methods and the AP-Moment scheme.

The Direct Implicit methods \cite{CLF82,LCF83,CLHP89,HeLa87} are also semi-implicit. They enjoy the same accuracy and stability properties as the Implicit Moment methods. They rely on a linearization of the fully implicit discretization, which decouple the field equations from the particle equations. The sources of the field equations are linearized around explicitly extrapolated positions of the particles. The source prediction in the Direct Implicit methods can actually be interpreted as an approximation of the moment equations, comparable to the one used in the AP-Particle scheme. The extrapolation step plays essentially the same role as the first particle advance in the AP-Particle scheme (though being more intricate in the Direct Implicit approach).

From this brief review, we can infer that the fully implicit, Implicit Moment and Direct Implicit methods are generally AP in the quasi-neutral limit. The fully implicit methods are far more costly than the AP-Moment and AP-Particle schemes. The Implicit Moment and Direct Implicit methods share some similarities with the AP-Moment and AP-Particle schemes in their formulation. However, their motivation and the methodology used to derive them differ significantly. The aim of the AP-Moment and AP-Particle schemes is to be consistent with a clearly defined quasi-neutral model, not to relax stability conditions. The elimination of the stability conditions related to the plasma period and the speed of light is a consequence of scaling assumptions made in Section~\ref{sec:scaling} for the definition of the quasi-neutral limit (the scaled plasma period and the ratio of the typical velocity to the speed of light vanish in the quasi-neutral limit). The derivation of the AP-Moment and AP-Particle schemes relies on a reformulation of the Vlasov-Maxwell system which unifies the Vlasov-Maxwell model and the quasi-neutral model in a single set of equations. This reformulation highlights the terms that need to be built or implicited in order to ensure consistency with the quasi-neutral model. This methodology allows us to limit the computational cost of the schemes to what is necessary in view of the AP property. Furthermore, this methodology could be applied to quasi-neutral models with a more reduced complexity, leading to more efficient numerical methods for some  kinds of problems.

\medskip

\subsection{AP-Moment and AP-Particle schemes in the electrostatic regime}\label{sec:static}

The electrostatic regime is characterized by a vanishing magnetic field. In the dimensionless system \eqref{VMadim:a}--\eqref{VMadim:e}, this amounts to the asymptotic  $\beta \to 0$. In this regime, the electric field is irrotational, since the Maxwell-Faraday equation \eqref{VMadim:c} simplifies into $\nabla \times E=0$, and it is assumed to derive from a scalar potential (there exists a scalar field $\phi$ such that $E = -\nabla \phi$). As a consequence, the Gauss law is sufficient to determine completely the electric field and the Vlasov-Maxwell system reduces to the Vlasov-Poisson system:
\begin{align}
&\partial_t f + v \cdot \nabla_x f + \nabla \phi \cdot \nabla_v f =0\,,\label{SVP1}\\
&-\lambda^2 \Delta \phi = 1-n\,.\label{SVP2}
\end{align}

Let us examine the AP-Moment and AP-Particle schemes in the electrostatic regime. Setting the magnetic field to zero, they read
\begin{align}
& \Big(\frac{\lambda^2}{\Delta t^2} + \hat{n}_h(X_N^m) \Big) \tilde{E}_h^{m+1} = \frac{\lambda^2}{\Delta t^2}E^{m}_h  -\frac{1}{\Delta t}J_h^{m+1,\star}\,,\label{AP-VP1}\\
& - \nabla_h \cdot \Big(\Big(\frac{\lambda^2}{\Delta t^2} + \hat{n}_h(X_N^m) \Big) \nabla_h p_h\Big) = \frac{1 - {n}_h(X_N^{m})}{\Delta t^2} -\frac{\lambda^2}{\Delta t^2} \nabla_h \cdot E_h^{m} \,,\label{AP-VP2}\\
& E_h^{m+1} = \tilde{E}_h^{m+1} - \nabla_h p_h \,,\label{AP-VP3}\\
& \frac{X_{N,j}^{m+1}-X_{N,j}^m}{\Delta t}=V_{N,j}^{m+1}\,,\qquad \forall j\in\{1,\dots,N\}\,,\label{AP-VP4}\\ 
& \frac{V_{N,j}^{m+1}-V_{N,j}^m}{\Delta t}= -E_h^{m+1}(X_{N,j}^{m})\,,\qquad \forall j\in\{1,\dots,N\}\,,\label{AP-VP5}
\end{align}
with, for the AP-Moment scheme,
\begin{equation}
J_h^{m+1,\star} = J_h(X_N^m,V_N^m) + \Delta t \, \nabla_h \cdot S_h(X_N^m,V_N^m)\,.
\end{equation}
and, for the AP-Particle scheme,
\begin{equation}
J_h^{m+1,\star}=J_h({X}_N^{m}+\Delta t\, V_N^{m},V_N^{m})\,.
\end{equation}


In a one-dimensional spatial setting, the AP-Moment and AP-Particle schemes can be simplified further. Indeed, for any primal vector field $F_h$, there is a primal scalar field $\phi_h$ such that $F_h= - \nabla \phi_h$. Let $\phi_h^{m+1}$ and $\tilde{\phi}^{m+1}_h$ be the potentials associated with $E_h^{m+1}$ and $\tilde{E}_h^{m+1}$, respectively. Equation \eqref{AP-VP3} implies the equality $\nabla_h \phi^{m+1}_h = \nabla_h \left(\tilde{\phi}^{m+1}_h + p_h\right)$. Then, adding \eqref{AP-VP1} and \eqref{AP-VP2}, we deduce
\begin{equation}\label{AP-VP6}
-\nabla_h \cdot \left(\left(\frac{\lambda^2}{\Delta t ^2} + \hat{n}_h(X_N^{m})\right) \nabla_h \phi^{m+1}_h\right) = \frac{1-n_h(X_N^{m})}{\Delta t^2} - \Delta t \nabla_h \cdot J_h^{m+1,\star} \,.
\end{equation}
This equation determines entirely the potential $\phi^{m+1}_h$ and thus the electric field $E^{m+1}_h$. Therefore, the Maxwell-Amp\`ere equation can be disregarded and the schemes reduce to the equations \eqref{AP-VP4}, \eqref{AP-VP5} and \eqref{AP-VP6}. We remark that the AP-Moment scheme is, in this case, equivalent to the PIC-AP2 scheme introduced in \cite{DDNSV10}.

\section{Numerical simulations}\label{sec:simu}

In order to assess their efficiency and investigate their properties, the AP schemes are tested on various problems and compared with reference explicit schemes (those described in Section \ref{sec-exp-schemes}). The first two problems, namely the classical Landau damping and the expansion of a plasma slab into vacuum, are one-dimensional in space and velocity and purely electrostatic. The two other problems, which describe a POS (Plasma Opening Switch), involve magnetized plasmas. The first one is a very simplified model of POS (one-dimensional in space and two-dimensional in velocity); the second one, more realistic, is two-dimensional in space and velocity and shows the propagation of a KMC wave.

The framework of the above problems is not exactly the general framework described in Sections \ref{sec:eq:cont} and \ref{sec:numeric} (the problems are one-dimensional or two-dimensional, the magnetic field is sometimes disregarded, the ions are not always motionless). The numerical schemes are implemented accordingly. These developments being straightforward, we do not detail them. However, for sake of clarity, we state the version of the Vlasov-Maxwell system used for each problem. All the simulations are performed with the cloud-in-cell assignement-interpolation method and the parameter $\theta$ is always taken equal to 1 in the $\theta$-schemes.

\subsection{Landau damping}

When a plasma in a spatially homogeneous equilibrium state is slightly perturbed, it returns exponentially fast, with oscillations, toward its initial equilibrium state. This is the so-called Landau damping. This problem allows us to test the ability of the AP schemes to accurately reproduce phenomena occurring at the Debye length and plasma period scales.

The non-neutral model used for this problem is a one-species, one-dimensional, electrostatic model (the magnetic field is disregarded, the ions are motionless and form a uniform background). Scaled as in Section \ref{sec:scaling}, the equations are
\begin{align}
&\partial_t f + v\,\partial_x f - E\,\partial_{v} f  =0\,,\\
& \lambda^2 \partial_t E = -J\,,\quad \lambda^2 \partial_x E = 1-n\,.
\end{align}
The space domain is $[0,4\pi]$ and the scaled Debye length $\lambda$ is taken equal to 1. The initial electron density follows a Maxwellian distribution with a small spatial perturbation:
\begin{equation}\label{eq:E:landau;analytic}
f_{0}(x,v)=\left(1+\alpha\cos\left(\frac{x}{2}\right)\right)\frac{1}{\sqrt{2 \pi}}e^{-\frac{v^2}{2}},
\end{equation}
where $\alpha = 5\cdot 10^{-2}$. Periodic and homogeneous Dirichlet boundary conditions are prescribed for the particles and the electric field, respectively. An analytical approximation of the electric field can be computed by applying a Laplace-Fourier transform to the linearized system; see \cite{CRS09} for the detailed calculation. Keeping only the dominating mode, the others being quickly damped, the following approximation is obtained:
\begin{equation}\label{Landau-analytic}
E(x,t) \approx -1.4708 \alpha e^{-0.1533 t}\cos(1.4156 t-0.536245)\sin\left(x/2\right).
\end{equation}

\begin{figure}[!ht]
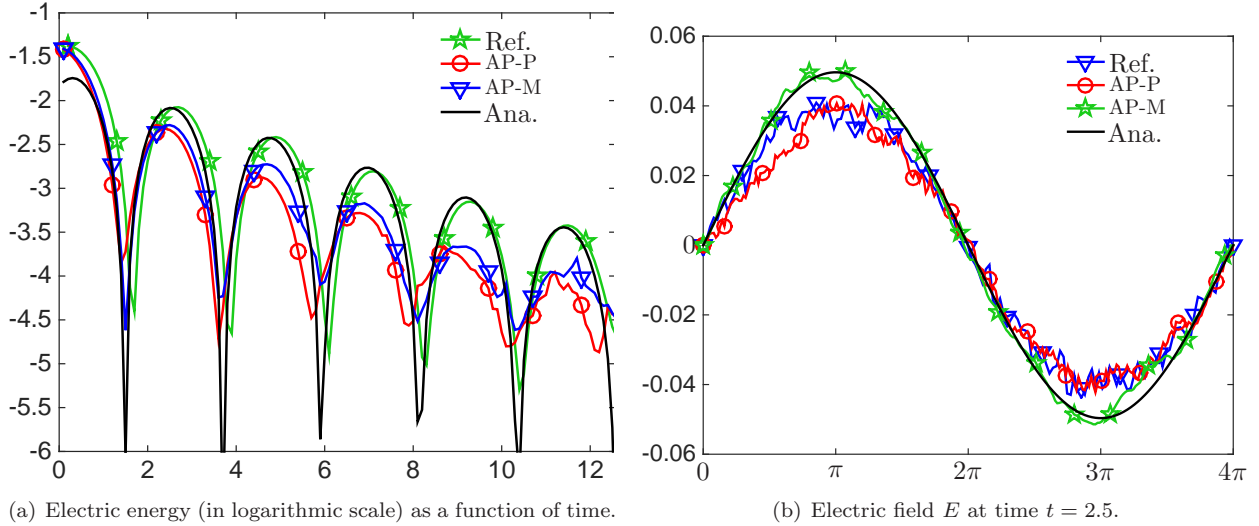

\begin{center}
\psfrag{Explicit}[l][l][1.]{Ref.}
\psfrag{Particle}[l][l][.8]{AP-P}
\psfrag{Moment}[l][l][.8]{AP-M}
\psfrag{Analytique}[l][l][1.]{Ana.}
\subfigure[Electric energy (in logarithmic scale) as a function of time.]{%
\includegraphics[width=0.49\textwidth]{Landau-ExHist.eps}}\hspace*{0.02\textwidth}%
\subfigure[Electric field  $E$ at time $t=2.5$.]{%
\psfrag{0}[][][1.]{$0$}
\psfrag{Pi}[][][1.]{$\pi$}
\psfrag{2Pi}[][][1.]{$2\pi$}
\psfrag{3Pi}[][][1.]{$3\pi$}
\psfrag{4Pi}[][][1.]{$4\pi$}
\includegraphics[width=0.49\textwidth]{Landau-Ex.eps}}
\caption{Landau damping. Analytical approximations (Ana.) and results computed using the reference scheme (Ref.), the AP-Particle scheme (AP-P), the AP-Moment scheme (AP-M). The simulations are performed with $\Delta x \approx \lambda/20$, $\Delta t = \lambda/10$ and $10^6$ particles.}
\label{Landau-E}
\end{center}
\end{figure}

Numerical simulations are performed with discretization parameters smaller than the Debye length and the plasma period. The numerical results, together with the analytical approximations given by the formula \eqref{Landau-analytic}, are represented in Figure~\ref{Landau-E}. We observe that the AP schemes reproduce quite accurately the oscillation period of the Landau damping but they are more dissipative than the reference explicit scheme (the first one described in Section \ref{sec-exp-schemes}). Note that the discrepancy between the numerical results and the analytical approximation at the first oscillation is due to the neglected Laplace-Fourier modes in the analytical approximation. The conservation properties of the AP schemes for the plasma oscillations could probably be improved with a more careful design of the time discretization, with centered or high-order approximations instead of backward approximations.

\subsection{Plasma expansion}

The second problem is the expansion of a plasma slab \cite{GMAH08,DDNSV10,manfredi_vlasov_2011}. The plasma is initially confined in a small area at the center of the domain and surrounded by vacuum. A non-neutral sheath forms at the plasma-vacuum transition and the large electric field created in this sheath accelerates the ions, leading to the plasma expansion. While the quasi-neutral model is able to account for the ion motion and the plasma expansion in the plasma bulk, it is inaccurate in the sheath. Therefore, this problem allows us to verify, on the one hand, the consistency of the AP schemes with the quasi-neutral model and their stability in the quasi-neutral regime and, the other hand, the consistency with the non-neutral plasma description, with the transition from one regime to the other. It is also an excellent test for the energy conservation properties of the schemes since it relies on a kinetic energy transfer from the electrons to the ions via the electric field created in the sheath.

The non-neutral model used for this problem is a two-species, one-dimensional, electrostatic model (the magnetic field is disregarded, the ions are not motionless). This test case is stated and implemented in dimensional variables, the equations being:
\begin{align*}
&\partial_t f_e + v\, \partial_x f_e - \frac{e}{m_e}E\, \partial_{v} f_e  =0,\\
&\partial_t f_i + v\, \partial_x f_i + \frac{e}{m_i}E\, \partial_{v} f_i =0,\\
& \partial_t E = -\frac{J}{\epsilon_0}\,, \quad \partial_x E =  \frac{\rho}{\epsilon_0} \,,
\end{align*}
where the indices $i$ and $e$ denote the quantities related to the ions and the electrons, respectively.

The space domain is $[-L,L]$ with $L= 1$ m. The problem being symmetric, the simulations are actually performed only on the half domain $[0,L]$. The ion mass is such that $m_i=1836\, m_e$. The initial ion density is equal to $n_{i0}$ for $ x\in [0,0.02]$ and is zero in the rest of the domain (different values of $n_{i0}$ are used in the simulations). The initial electron density $n_{e0}$ satisfies the Maxwell-Boltzmann relation $n_{e0}=n_{i0} \exp(\phi_0)$, where $\phi_0$ is the electrostatic potential, solution of $-\Delta \phi_0 = e(n_{i0}-n_{e0})/\epsilon_0$ (this nonlinear problem is solved numerically). The initial electron and ion velocities follow Maxwellian distributions with zero mean velocities and respective temperatures $T_e$ and $T_i$. The electron temperature $T_e$ is chosen so that the initial electron thermal velocity is equal to $v_{th,e}=1$ m$\cdot$s$^{-1}$ and the ion temperature is such that $T_i = 10^{-3}T_e$. To represent the symmetry at the left end of the half domain, a specular reflection condition is prescribed for the distribution functions (i.e. exiting particles are reinjected with reversed velocities) and an homogeneous Dirichlet condition is enforced on the electric field. At the right end, an absorbing condition is enforced on the distribution functions (i.e. exiting particles are not reinjected) and an homegeneous Neumann condition is enforced on the electric field. In addition to the time and space scales related to the electron motion, this 
problem involves time scales related to the ion motion, namely the ion plasma period $\tau_{pi}=\sqrt{m_i \epsilon_0/(e^2 n_{i0})}$ and the ion accoustic wave speed $c_s=\sqrt{k_B T_e/m_i}$. The speed of the plasma expansion is theoretically the same order as $c_s$ \cite{GMAH08}.

\begin{figure}[!ht]
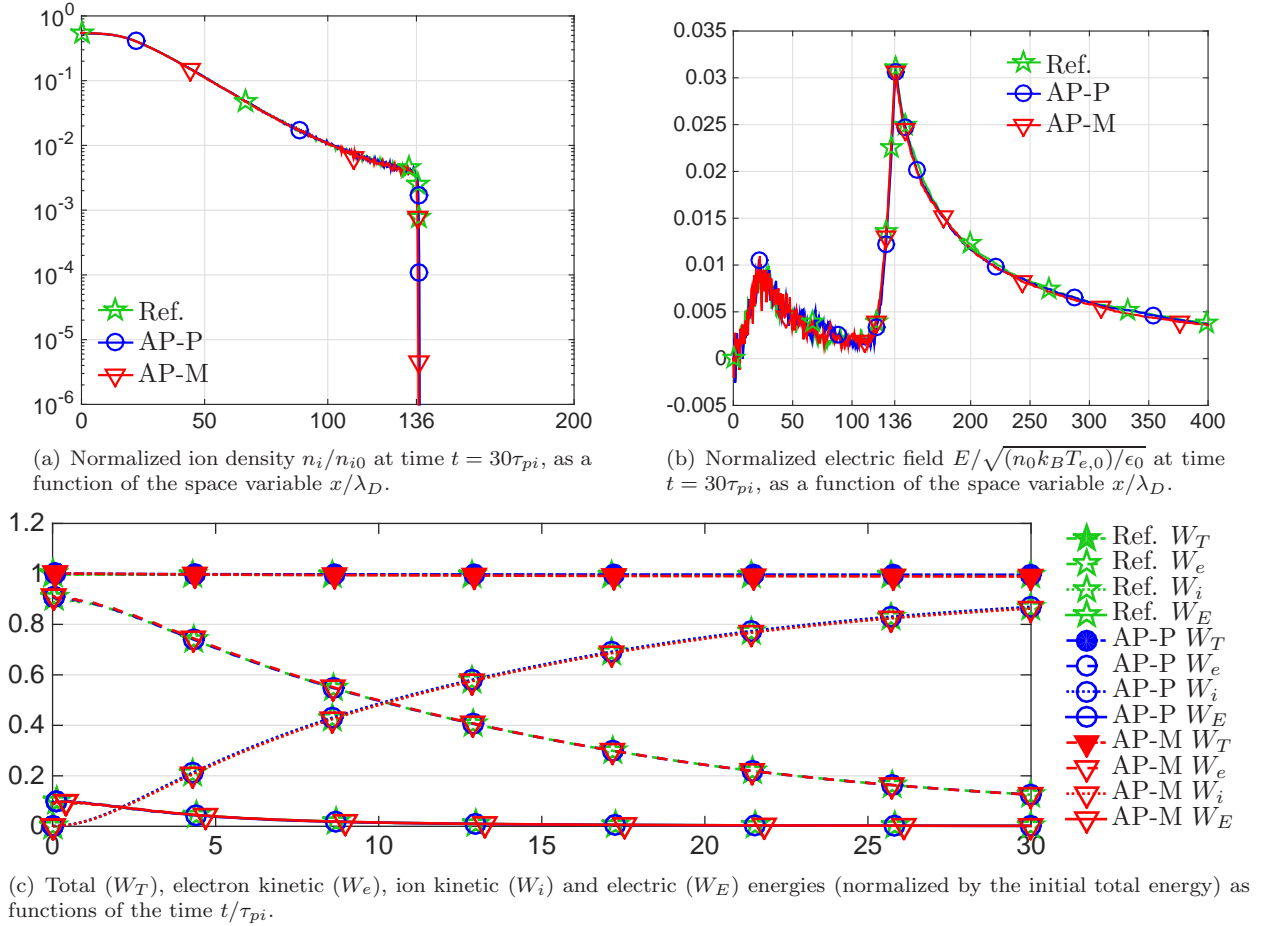

\begin{center}
\psfrag{Explicit}[l][l][1]{Ref.}
\psfrag{Particles}[l][l][1]{AP-P}
\psfrag{Moments}[l][l][1]{AP-M}
\subfigure[Normalized ion density $n_i/n_{i0}$ at time $t=30\tau_{pi}$, as a function of the space variable $x/\lambda_D$.]{%
\includegraphics[width=0.45\textwidth]{Expansion-Ref/RhoI.eps}}\hspace{0.06\textwidth}%
\subfigure[Normalized electric field $E/\sqrt{(n_0k_BT_{e,0})/\epsilon_0}$ at time $t=30\tau_{pi}$, as a function of the space variable $x/\lambda_D$.]{%
\includegraphics[width=0.45\textwidth]{Expansion-Ref/E.eps}}
\subfigure[Total ($W_T$), electron kinetic ($W_e$), ion kinetic ($W_i$) and electric ($W_E$) energies (normalized by the initial total energy) as functions of the time $t/\tau_{pi}$.\label{Expansion-ion-low-density:c}]{%
\psfrag{Ref-Wtotal}[l][l][1]{Ref. $W_T$}
\psfrag{Ref-Welec}[l][l][1]{Ref. $W_e$}
\psfrag{Ref-Wion}[l][l][1]{Ref. $W_i$}
\psfrag{Ref-Wfield}[l][l][1]{Ref. $W_E$}
\psfrag{Mom-Wtotal}[l][l][1]{AP-M $W_T$}
\psfrag{Mom-Welec}[l][l][1]{AP-M $W_e$}
\psfrag{Mom-Wion}[l][l][1]{AP-M $W_i$}
\psfrag{Mom-Wfield}[l][l][1]{AP-M $W_E$}
\psfrag{Part-Wtotal}[l][l][1]{AP-P $W_T$}
\psfrag{Part-Welec}[l][l][1]{AP-P $W_e$}
\psfrag{Part-Wion}[l][l][1]{AP-P $W_i$}
\psfrag{Part-Wfield}[l][l][1]{AP-P $W_E$}
\includegraphics[width=\textwidth]{Expansion-Ref/Energy-Ref.eps}}
\caption{Low-density plasma expansion. Results computed using the reference scheme (Ref.), the AP-Particle scheme (AP-P) and the AP-Moment scheme (AP-M). The simulations are performed with $\Delta x = 0.4 \lambda_D$, $\Delta t = 0.5 \tau_p$ and $10^6$ particles for each species.}
\label{Expansion-ion-low-density}
\end{center}
\end{figure}

A first series of simulations is performed with a low-density plasma (the initial density $n_{i0}$ is adjusted to obtain $\lambda_D=10^{-3}$ m and $\tau_p=10^{-3}$ s). The mesh size and the time step are smaller than the Debye length and the plasma period. The numerical results are represented in Figure~\ref{Expansion-ion-low-density} and, to facilitate the comparisons, they are expressed with the same normalized quantities as in \cite{GMAH08,DDNSV10}. The AP schemes reproduce correctly the physics of the plasma expansion: the plasma reaches the position $x_T\approx 136\lambda_D$ at the end of the simulation, there is a large electric field at the plasma edge, the (thermal) electron kinetic energy is converted into (drift) ion kinetic during the simulation. The total energy of the system is conserved, which demonstrates the good conservation properties of the schemes. Furthermore, the results of the AP schemes are almost identical to the results of the reference explicit scheme (the first one described in Section \ref{sec-exp-schemes}) and are in agreement with the results of \cite{GMAH08,DDNSV10}.

A second series of simulations is performed with a high-density plasma (the initial density $n_{i0}$ is adjusted to get $\lambda_D=10^{-4}$ m and $\tau_p=10^{-4}$ s). The reference explicit scheme is still used with discretization parameters resolving the Debye lenth and the plasma period, whereas the AP schemes are now used with discretization parameters significantly larger than the Debye length and the plasma period (the number of particles is also reduced). The Maxwell-Boltzmann relation used in the previous simulations to compute the initial electron density turns out to be inaccurate with a coarse mesh. It is thus better to make the initial electron density equal to the ion density. The numerical results are represented in Figure~\ref{Expansion-high-density}. As in the previous simulations, the AP schemes reproduce correctly the physics of the plasma expansion and the energy conservation is satisfactory. The results provided by the AP schemes are even quite close to the results obtained with the reference explicit scheme, despite the important difference of computational cost. Note that, in Figure~\ref{Expansion:high:rho}, the oscillations in the reference scheme results are not physical and are due to the insufficient number of particles. The outputs of the AP-Moment scheme and the AP-Particle schemes are similar but the discretization used for the AP-Moment scheme is finer. This shows that the AP-Particle scheme is less diffusive than the AP-Moment scheme and provides a better tracking of the plasma-vacuum interface. Finally, these simulations demonstrate the ability of the AP schemes to simulate quasi-neutral problems with moderate computational costs. However, it is not possible to use too coarse discretizations, owing to the deterioration of the energy conservation. This is a well-known drawback of the semi-implicit methods \cite{DGAH10}, due to fast particles crossing more than one cell in a time step. This point should be investigated in subsequent realizations.

\begin{figure}[!ht]
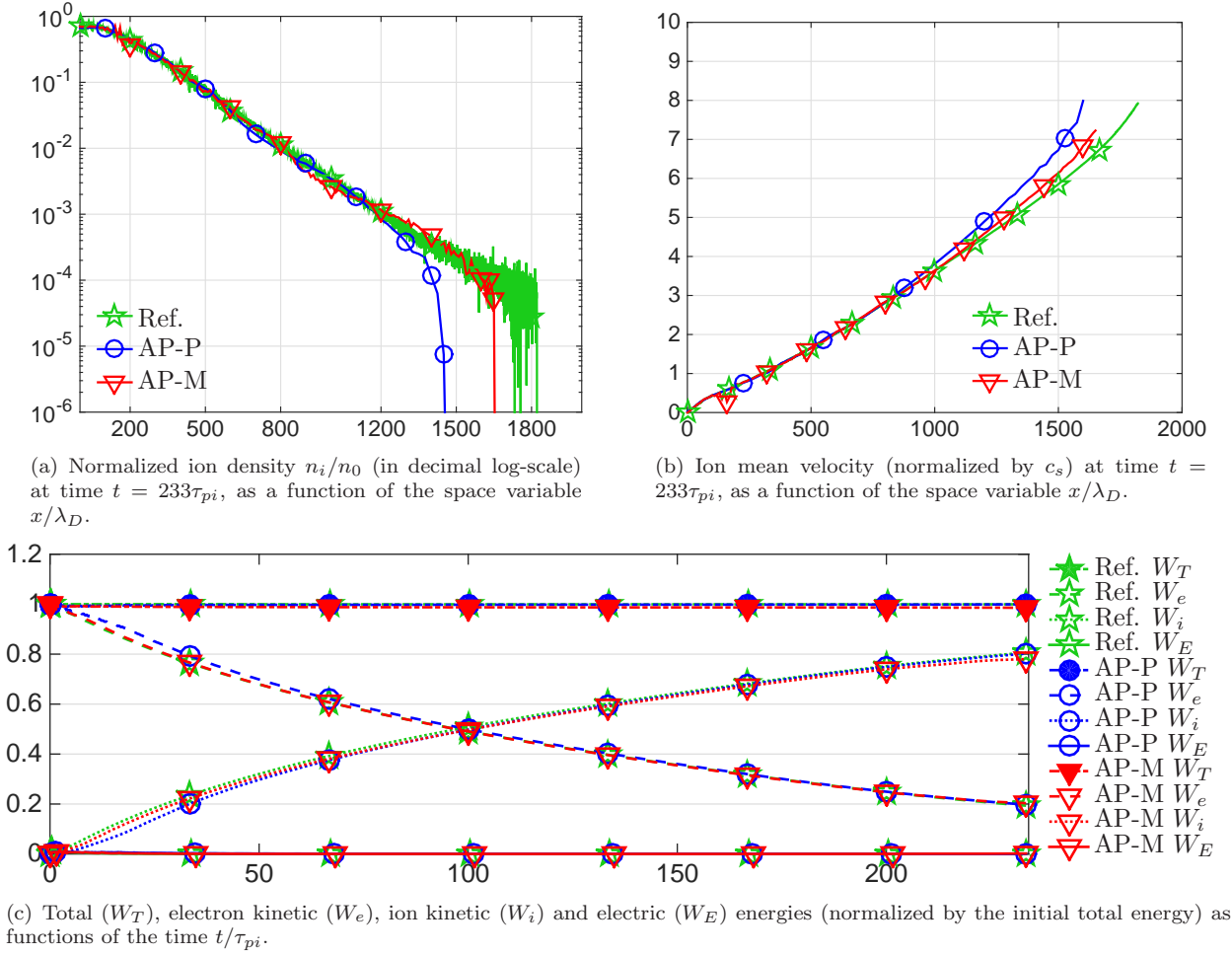

\begin{center}
\psfrag{Explicit}[l][l][1]{Ref.}
\psfrag{Particles}[l][l][1]{AP-P}
\psfrag{Moments}[l][l][1]{AP-M}
\subfigure[Normalized ion density $n_i/n_0$ (in decimal log-scale) at time $t=233 \tau_{pi}$, as a function of the space variable $x/\lambda_D$.\label{Expansion:high:rho}]{%
\includegraphics[width=0.45\textwidth]{Expansion-High/RhoI.eps}}\hspace{0.06\textwidth}%
\subfigure[Ion mean velocity (normalized by $c_s$) at time $t=233 \tau_{pi}$, as a function of the space variable $x/\lambda_D$.]{%
\includegraphics[width=0.45\textwidth]{Expansion-High/Ui.eps}\label{Expansion:high:Ui}}
\subfigure[Total ($W_T$), electron kinetic ($W_e$), ion kinetic ($W_i$) and electric ($W_E$) energies (normalized by the initial total energy) as functions of the time $t/\tau_{pi}$.]{%
\psfrag{Ref-Wtotal}[l][l][1]{Ref. $W_T$}
\psfrag{Ref-Welec}[l][l][1]{Ref. $W_e$}
\psfrag{Ref-Wion}[l][l][1]{Ref. $W_i$}
\psfrag{Ref-Wfield}[l][l][1]{Ref. $W_E$}
\psfrag{Mom-Wtotal}[l][l][1]{AP-M $W_T$}
\psfrag{Mom-Welec}[l][l][1]{AP-M $W_e$}
\psfrag{Mom-Wion}[l][l][1]{AP-M $W_i$}
\psfrag{Mom-Wfield}[l][l][1]{AP-M $W_E$}
\psfrag{Part-Wtotal}[l][l][1]{AP-P $W_T$}
\psfrag{Part-Welec}[l][l][1]{AP-P $W_e$}
\psfrag{Part-Wion}[l][l][1]{AP-P $W_i$}
\psfrag{Part-Wfield}[l][l][1]{AP-P $W_E$}
\includegraphics[width=\textwidth]{Expansion-High/Energy-P-25-5-6-M-10-2-6.eps}}
\caption{High-density plasma expansion. Results computed using the reference scheme (Ref.), the AP-Particle scheme (AP-P) and the AP-Moment scheme (AP-M). The discretization parameters are : $\Delta x = 0.5 \lambda_D$, $\Delta t = 0.2 \tau_p$ and $10^7$ particles (for each species) for the reference scheme ; $\Delta x = 25 \lambda_D$, $\Delta t = 5 \tau_p$ and $10^6$ particles for the AP-Particle scheme ; $\Delta x = 10 \lambda_D$, $\Delta t = 2 \tau_p$ and $10^6$ particles for the AP-Moment scheme.}
\label{Expansion-high-density}
\end{center}
\end{figure}




\subsection{A one-dimensional model of POS}

A Plasma Opening Switch (POS) is a device used to deliver a large current with a rapid increase of its impedance \cite{Del03}. It consists of a coaxial cylindrical transmission line filled whith a high-density plasma and connected to an input power generator \cite{zeng_particle--cell_2001}. In a first phase, the conduction phase, the plasma short-circuits the two electrodes of the transmission line and prevents the power to be delivered to the load. Then, the interaction of the electromagnetic wave with the plasma leads to the formation of a vacuum gap and finally to the total opening of the plasma, making possible the transmission of the current to the load. During the opening of the plasma, charge separation phenomena are essential and non-neutral sheath appear at the plasma edge \cite{fruchtman_sheath_1999, richardson_particle--cell_2012}.

A very simplified model of POS, one-dimensional in space and two-dimensional in velocity, is considered in this section. This model has been introduced in \cite{DDSprep}. Let us denote by $(x,y,z)$ the three-dimensional Cartesian coordinates. The particles move only along the $x$-direction but have a velocity both in the $x$-direction and the $y$-direction so that the electromagnetic field generated by the particles has only components $E_x$, $E_y$ and $B_z$. Therefore, the two-species Vlasov-Maxwell model reduces to the following equations:
\begin{align*}
&\partial_t f_e + v_x \cdot \partial_x f_e - \frac{e}{m_e}(E_x+ v_y B_z) \partial_{v_x} f_e -\frac{e}{m_e}(E_y- v_x B_z)\partial_{v_y} f_e =0\,,\\
&\partial_t f_i + v_x \cdot \partial_x f_i + \frac{e}{m_i}(E_x+ v_y B_z) \partial_{v_x} f_i +\frac{e}{m_i}(E_y- v_x B_z)\partial_{v_y} f_i =0\,,\\
& \frac{1}{c^2} \partial_t E_x  =- \mu_0 J_x\,,\quad \frac{1}{c^2} \partial_t E_y + \partial_x B_z  =-\mu_0 J_y\,,\quad \partial_t B_z + \partial_x E_y  = 0\,,
\end{align*}
supplemented with the Maxwell-Gauss and Maxwell-Thomson equations. In the above equations, the indices $i$ and $e$ denote the quantities related to the ions and the electrons, respectively.

The space domain is $[0,L]$ with $L=0.2$ m. At the initial time, the plasma fills the region between $x=0.05$ and $x=0.15$ with the density $n_0$ (different values of $n_0$ are used in the simulations). The initial ion and electron velocities follow Maxwellian distributions with the same temperature $T_i=T_e=10^3$ eV. The ion mass is such that $m_i= 2 \cdot 10^4$ $m_e$. A transverse electromagnetic wave is sent from the left end of the domain. The value of its electric component at $x=0$ and time $t$ is
\begin{equation}
E_{y,{\rm inc}}(t)=A_{\rm inc}\left(t/t_{\rm inc}\right)^{4/3}\left(8+\left(t/t_{\rm inc}\right)^4\left(\left(t/t_{\rm inc}\right)^8-6\right)\right)\,,
\end{equation}
where $A_{\rm inc}=1.8\cdot 10^8$ V$\cdot$m$^{-1}$ and $t_{\rm inc}=10^{-8}$ s. Transparent boundary conditions are prescribed at each end of the domain to avoid wave reflections.

\begin{table}[!ht]
\begin{center} 
\renewcommand{\arraystretch}{1.3}
\caption{One-dimensional model of POS. Plasma parameters and discretization parameters used in the different simulations. The initial density $n_0$ is in m$^{-3}$, the initial Debye length $\lambda_D$ in m and the plasma period $\tau_p$ in s. The value N$_\text{p}$ is the total number of particles.}
\label{tab-POS1D}
\begin{tabular}{| c || c |c |c || c| c|c|c|c|c| }\hline
\multirow{2}{*}{Config.} & \multirow{2}{*}{$n_0$} & \multirow{2}{*}{$\lambda_D$} & \multirow{2}{*}{$\tau_{p}$} & \multicolumn{3}{c|}{Reference scheme} & \multicolumn{3}{c|}{AP schemes} \\\cline{5-10}
 & & & & $\Delta x/ \lambda_D$ & $\Delta t/\tau_p$ & N$_\text{p}$ & $\Delta x / \lambda_D$ & $\Delta t/\tau_p$ & N$_\text{p}$ \\ \hline \hline
Low-a & $10^{16}$& $10^{-4}$ &$ 10^{-10}$ &$ 1 $ & $0.1$  &$10^6$&$1$ & $0.1$&$10^6$ \\\hline 
Low-b & \multirow{1}{*}{$10^{17}$}& \multirow{1}{*}{$5\; 10^{-5}$} &\multirow{1}{*}{$5 \;10^{-11}$} &\multirow{1}{*}{$0.2$} & \multirow{1}{*}{$0.04$}  &\multirow{1}{*}{$10^7$}&$20$ & $0.04$&$10^4$ \\\hline\hline
High-a & $10^{20}$& $ 10^{-5}$ &$10^{-11}$ & \multicolumn{3}{c|}{}&$10^2$ & $1$&$10^5$ \\ \cline{1-4} \cline{8-10}
High-b & $10^{22}$& $ 10^{-7}$ &$10^{-13}$ & \multicolumn{3}{c|}{}&$10^4$ & $10^2$ & $10^5$ \\\hline
\end{tabular}
 \end{center}
\end{table}

Four series of simulations are performed (their main characteristics are collected in Table~\ref{tab-POS1D}). In the Low-a simulations, the AP schemes and the reference explicit scheme (the second one described in Section \ref{sec-exp-schemes}) are used with discretization parameters that resolve the Debye length and the plasma period. The numerical results, represented in Figure~\ref{fig:POS:low}, are comparable for the three schemes. In particular, the level of numerical noise is equivalent. We observe that the electrons and the ions are accelerated by the incident wave (see Figures~\ref{fig:POS:low}(c) and \ref{fig:POS:low}(d)). The electrons being less massive than the ions, they are accelerated more strongly and are expelled from the plasma edge, which breaks the quasi-neutrality in this zone (see Figures~\ref{fig:POS:low}(a) and \ref{fig:POS:low}(b)). This charge separation creates a large electric field $E_x$ (see Figure~\ref{fig:POS:low}(e)). As for the magnetic field, it is significantly transmitted through the plasma (see Figure~\ref{fig:POS:low}(f)).

\begin{figure}
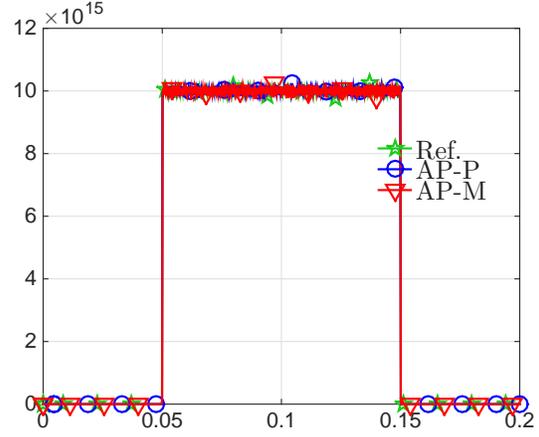
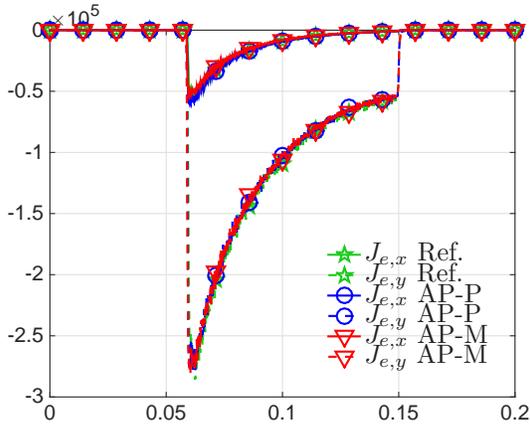
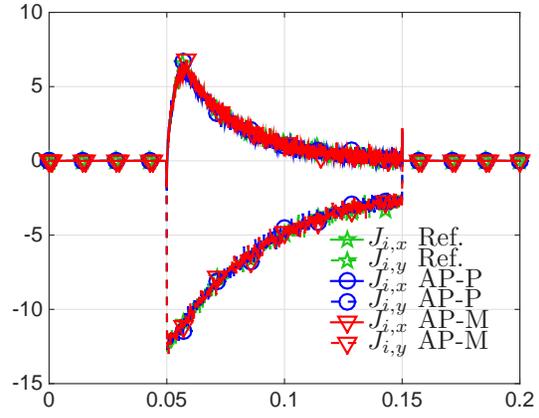
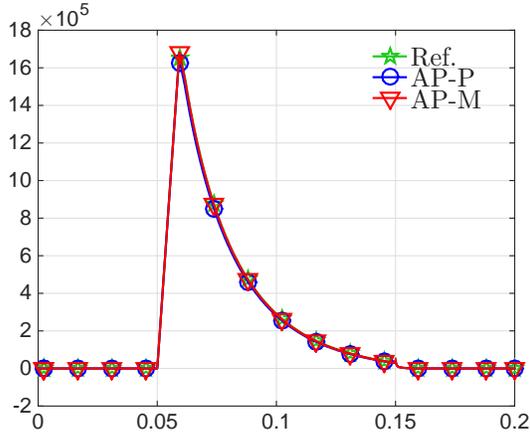
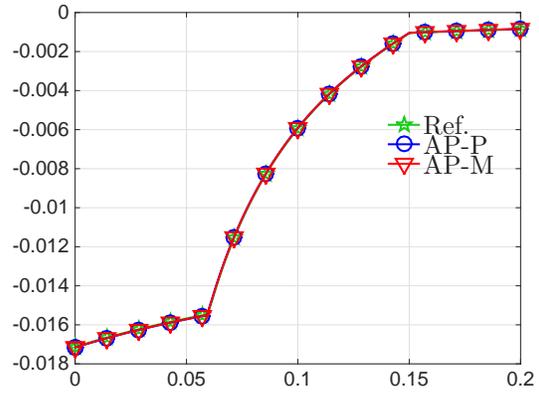

\begin{center}
\subfigure[Electron density (in m$^{-3}$).]{
\psfrag{(Ref)}[l][l][1.]{Ref.}
\psfrag{(Part)}[l][l][1.]{AP-P}
\psfrag{(Moment)}[l][l][1.]{AP-M}
\includegraphics[width=0.42\textwidth]{./P1-low-a/RhoE.eps}}\hspace*{0.06\textwidth}%
\subfigure[Ion density (in m$^{-3}$).]{
\psfrag{(Ref)}[l][l][1.]{Ref.}
\psfrag{(Part)}[l][l][1.]{AP-P}
\psfrag{(Moment)}[l][l][1.]{AP-M}
\includegraphics[width=0.42\textwidth]{./P1-low-a/RhoI.eps}}

\subfigure[Electron current density in the $x$-direction and $y$-direction (in A$\cdot \text{m}^{-2}$).]{
\psfrag{(X Ref)}[l][l][1.]{$J_{e,x}$ Ref.}
\psfrag{(Y Ref)}[l][l][1.]{$J_{e,y}$ Ref.}
\psfrag{(X Part)}[l][l][1.]{$J_{e,x}$ AP-P}
\psfrag{(Y Part)}[l][l][1.]{$J_{e,y}$ AP-P}
\psfrag{(X Moment)}[l][l][1.]{$J_{e,x}$ AP-M}
\psfrag{(Y Moment)}[l][l][1.]{$J_{e,y}$ AP-M}
\includegraphics[width=0.42\textwidth]{./P1-low-a/Jexy.eps}}\hspace*{0.06\textwidth}%
\subfigure[Ion current density in the $x$-direction and $y$-direction (in A$\cdot \text{m}^{-2}$).]{
\psfrag{(X Ref)}[l][l][1.]{$J_{i,x}$ Ref.}
\psfrag{(Y Ref)}[l][l][1.]{$J_{i,y}$ Ref.}
\psfrag{(X Part)}[l][l][1.]{$J_{i,x}$ AP-P}
\psfrag{(Y Part)}[l][l][1.]{$J_{i,y}$ AP-P}
\psfrag{(X Moment)}[l][l][1.]{$J_{i,x}$ AP-M}
\psfrag{(Y Moment)}[l][l][1.]{$J_{i,y}$ AP-M}
\includegraphics[width=0.42\textwidth]{./P1-low-a/Jixy.eps}}

\subfigure[Electric field $E_x$ (V$\cdot\text{m}^{-1}$).]{
\psfrag{(Ref)}[l][l][1.]{Ref.}
\psfrag{(Part)}[l][l][1.]{AP-P}
\psfrag{(Moment)}[l][l][1.]{AP-M}
\includegraphics[width=0.42\textwidth]{./P1-low-a/Ex.eps}}\hspace*{0.06\textwidth}%
\subfigure[Magnetic field $B_z$ (in Teslas).]{
\psfrag{(Ref)}[l][l][1.]{Ref.}
\psfrag{(Part)}[l][l][1.]{AP-P}
\psfrag{(Moment)}[l][l][1.]{AP-M}
\includegraphics[width=0.42\textwidth]{./P1-low-a/Bz.eps}}

\caption{Low-density POS (Low-a). Results computed using the reference scheme (Ref.), the AP-Moment scheme (AP-M) and the AP-Particle scheme (AP-P) at time $t= 3\cdot 10^{-9}$ s.}\label{fig:POS:low}
\end{center}
\end{figure}

The Low-b simulations deal with a higher initial plasma density. The reference explicit scheme still uses discretization parameters that resolve the Debye length and the plasma period, while the AP schemes use a mesh size 20 times larger than the Debye length and a reduced number of particles. Despite the huge difference of computational cost between the simulations, the numerical results are indistinguishable (see Figure~\ref{fig:POS:low:b}). In these simulations, the current created by the electron motion at the plasma edge is strong enough to stop the penetration of the magnetic field into the plasma bulk (see Figure~\ref{fig:POS:low:b}(d)).

\begin{figure}
\begin{center}
\subfigure[Electron density (in m$^{-3}$).]{
\psfrag{(Ref)}[l][l][1.]{Ref.}
\psfrag{(Part)}[l][l][1.]{AP-P}
\psfrag{(Moment)}[l][l][1.]{AP-M}
\includegraphics[width=0.42\textwidth]{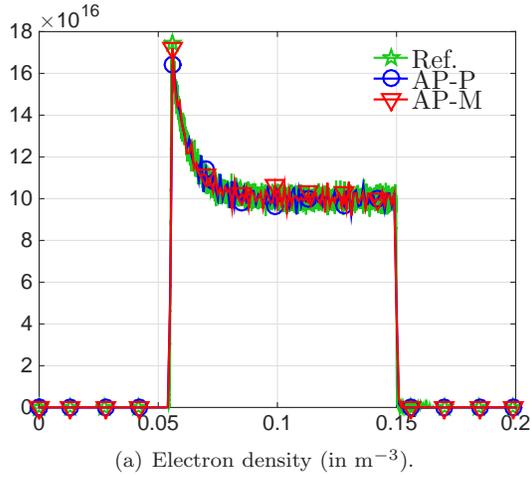}}\hspace*{0.06\textwidth}%
\subfigure[Electron current density in the $x$-direction and $y$-direction (in A$\cdot \text{m}^{-2}$).]{
\psfrag{(X Ref)}[l][l][1.]{$J_{e,x}$ Ref.}
\psfrag{(Y Ref)}[l][l][1.]{$J_{e,y}$ Ref.}
\psfrag{(X Part)}[l][l][1.]{$J_{e,x}$ AP-P}
\psfrag{(Y Part)}[l][l][1.]{$J_{e,y}$ AP-P}
\psfrag{(X Moment)}[l][l][1.]{$J_{e,x}$ AP-M}
\psfrag{(Y Moment)}[l][l][1.]{$J_{e,y}$ AP-M}
\includegraphics[width=0.42\textwidth]{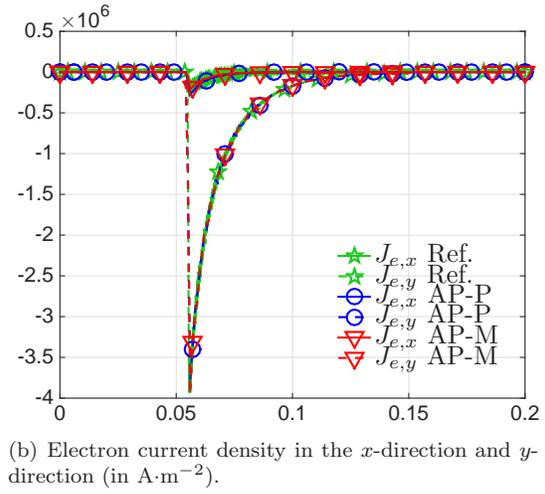}}

\subfigure[Electric field $E_x$ (in V$\cdot\text{m}^{-1}$).]{
\psfrag{(Ref)}[l][l][1.]{Ref.}
\psfrag{(Part)}[l][l][1.]{AP-P}
\psfrag{(Moment)}[l][l][1.]{AP-M}
\includegraphics[width=0.42\textwidth]{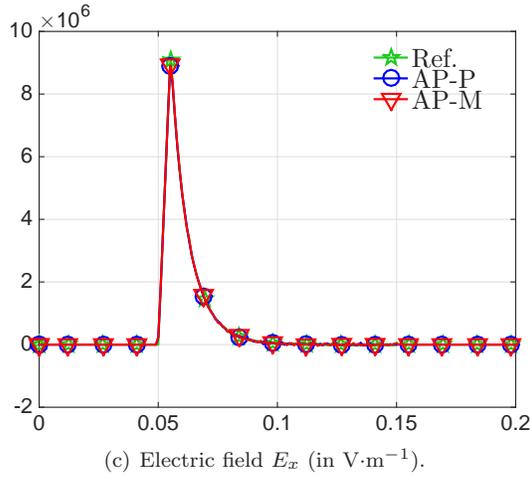}}\hspace*{0.06\textwidth}%
\subfigure[Magnetic field $B_z$ (in Teslas).]{
\psfrag{(Ref)}[l][l][1.]{Ref.}
\psfrag{(Part)}[l][l][1.]{AP-P}
\psfrag{(Moment)}[l][l][1.]{AP-M}
\includegraphics[width=0.42\textwidth]{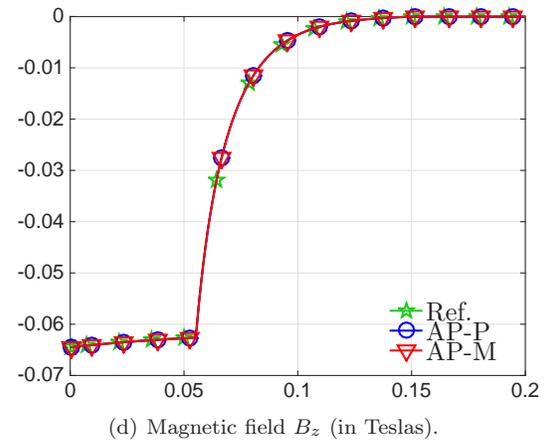}}

\caption{Low-density POS (Low-b). Results computed using the reference scheme (Ref.), the AP-Moment scheme (AP-M) and the AP-Particle scheme (AP-P) at time $t= 4 \cdot 10^{-9} $ s.}\label{fig:POS:low:b}
\end{center}
\end{figure}

For the high-density cases (High-a and High-b), the computational cost of the reference explicit scheme is prohibitive, so that no simulation is carried out with this scheme. The AP schemes are used with very coarse discretizations. For the most demanding case, the mesh size is $10^4$ larger than the Debye length, the time step is $100$ times larger than the plasma period and the number of numerical particles is only $10^5$. The numerical results are shown in Figure~\ref{fig:POS:high}. In contrast to the low-density simulations, the incident magnetic field is almost entirely reflected at the plasma edge (due to the very large electron current). The level of numerical noise in the outputs remains moderate, even in the High-b simulation. These simulations demonstrate the ability of the AP schemes to handle high-density plasmas and vacuum-plasma transitions with very coarse discretizations.

\begin{figure}
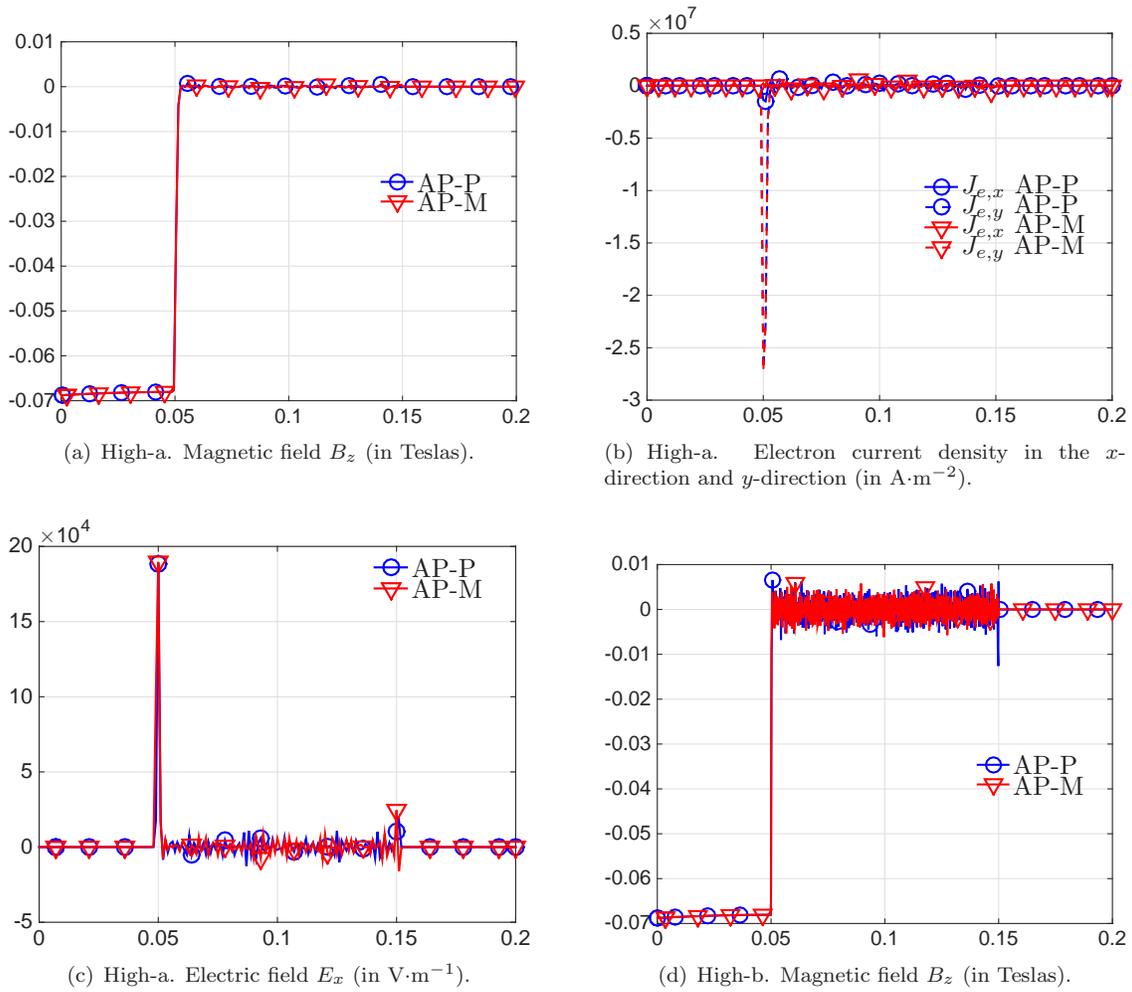

\begin{center}
\subfigure[High-a. Magnetic field $B_z$ (in Teslas).]{%
\psfrag{(Ref)}[l][l][1.]{Exp.}
\psfrag{(Part)}[l][l][1.]{AP-P}
\psfrag{(Moment)}[l][l][1.]{AP-M}
\includegraphics[width=0.42\textwidth]{./P1-high-a/Bz.eps}}\hspace*{0.06\textwidth}%
\subfigure[High-a. Electron current density in the $x$-direction and $y$-direction (in A$\cdot \text{m}^{-2}$).]{%
\psfrag{(X Ref)}[l][l][1.]{$J_{e,x}$ Ref.}
\psfrag{(Y Ref)}[l][l][1.]{$J_{e,y}$ Ref.}
\psfrag{(X Part)}[l][l][1.]{$J_{e,x}$ AP-P}
\psfrag{(Y Part)}[l][l][1.]{$J_{e,y}$ AP-P}
\psfrag{(X Moment)}[l][l][1.]{$J_{e,x}$ AP-M}
\psfrag{(Y Moment)}[l][l][1.]{$J_{e,y}$ AP-M}
\includegraphics[width=0.42\textwidth]{./P1-high-a/Jexy.eps}}

\subfigure[High-a. Electric field $E_x$ (in V$\cdot\text{m}^{-1}$).]{%
\psfrag{(Ref)}[l][l][1.]{Exp.}
\psfrag{(Part)}[l][l][1.]{AP-P}
\psfrag{(Moment)}[l][l][1.]{AP-M}
\includegraphics[width=0.42\textwidth]{./P1-high-a/Ex.eps}}\hspace*{0.06\textwidth}%
\subfigure[High-b. Magnetic field $B_z$ (in Teslas).]{%
\psfrag{(Part)}[l][l][1.]{AP-P}
\psfrag{(Moment)}[l][l][1.]{AP-M}
\includegraphics[width=0.42\textwidth]{./P1-high-b/Bz.eps}}
\caption{High-density POS (High-a and High-b). Results computed using the AP-Moment scheme (AP-M) and the AP-Particle scheme (AP-P) at time $t= 4 \cdot 10^{-9} $ s.}\label{fig:POS:high}
\end{center}
\end{figure}

\subsection{Propagation of a KMC wave in a POS}

We now consider a two-dimensional model of POS where the plasma density is not uniform in the transverse direction to the electromagnetic wave propagation. In this configuration a magnetic shock wave, the so-called KMC wave, propagates into the plasma \cite{swanekamp_particlecell_1996}. The existence of KMC waves can be derived from the quasi-neutral equations (see \ref{sec:KMC2}). Consequently, their simulation offers a means to verify the consistency with the quasi-neutral limit in an electromagnetic context.

The model is a one-species, two-dimensional model. The ions are motionless but their density is not uniform. Let us denote by $(x,y,z)$ the three-dimensional Cartesian coordinates. The particle motion is restricted to the $(x,y)$-plane and thus the electromagnetic field generated by the particles has only components $E_x$, $E_y$, and $B_z$. Finally, the Vlasov-Maxwell system simplifies into
\begin{align}
&\partial_t f + v_x \cdot \partial_x f  + v_y \cdot \partial_y f - \frac{e}{m}(E_x+ v_y B_z) \partial_{v_x} f -\frac{e}{m}(E_y- v_x B_z)\partial_{v_y} f =0\,,\label{2DVM1}\\
& \frac{1}{c^2} \partial_t E_x - \partial_y B_z =- \mu_0 J_x\,,\quad \frac{1}{c^2} \partial_t E_y + \partial_x B_z  =-\mu_0 J_y\,,\quad \partial_t B_z + \partial_x E_y - \partial_y E_x = 0\,,\label{2DVM2}
\end{align}
supplemented with the Maxwell-Gauss and Maxwell-Thomson equations.

The space domain is the rectangle $[0,L_x]\times[0,L_y]$ with $L_x=0.2$ m and $L_y=0.03$ m. The lower and upper sides of the domain represent the cathode and the anode of the POS, respectively. At the initial time, the plasma fills the area between $x=0.05$ and $x=0.15$ with the density
\begin{equation*} 
n_0(x,y)= \begin{cases}n_{\rm min} & \text{if $0\leq y < 0.01$} \\
n_{\rm min}n_{\rm max}/[(n_{\rm min}-n_{\rm max})(y/0.01-1)+ n_{\rm max}] & \text{if $0.01 \leq y < 0.02$} \\
 n_{\rm min} & \text{if $0.02\leq y \leq 0.03$}
\end{cases}
\end{equation*}
A transverse electromagnetic wave is sent from the left end of the domain. Transparent boundary conditions are prescribed at each end of the domain to avoid wave reflections.

The initial plasma density is such that $\partial_y \left(\frac{1}{n_0}\right) (x,y)=10^2 (n_{\rm max}-n_{\rm min})/(n_{\rm max}n_{\rm min})$ for $0.01 \leq y \leq 0.02$. Therefore, according to the theory presented in \ref{sec:KMC2}, the incident wave should propagate into the plasma as a KMC wave (between the lines $y=0.01$ and $y=0.02$). Moreover, the expected speed of this KMC wave is
\begin{equation}\label{eq:def:KMC:Speed2}
  V_\text{KMC} = 10^2 \frac{B_z}{e \mu_0} \frac{n_{\rm max}-n_{\rm min}}{n_{\rm max}n_{\rm min}}\,.
\end{equation}

\begin{table}
 \centering
 \renewcommand{\arraystretch}{1.3}
  \caption{Propagation of a KMC wave in a POS. Plasma parameters and discretization parameters used in the different simulations. The inital densities $n_{\rm min}$ and $n_{\rm max}$ are in m$^{-3}$, the electron temperature $T_e$ in eV, the initial Debye length $\lambda_D$ in m, the theoretical KMC wave speed in m$\cdot$s$^{-1}$. The value $\text{N}_\text{p}$ is the number of particles. }
  \label{tab:KMC}
  \begin{tabular}[c]{|c||c|c|c|c|c|c|c|c|c|}\hline
Config. & $n_{\rm min}$ & $n_{\rm max}$  & $T_e$ & $\lambda_D$ &  $V_\text{KMC}$ & Grid & $\Delta x$ & $\Delta y$ & $\text{N}_{\text{p}}$ \\\hline \hline
   (a) &$10^{19}$ & $10^{20}$ & $6 \cdot 10^4$ &$1.69 \cdot 10^{-6}$ & $25 \cdot 10^6$  & $100 \times 100$ & $ 2 \cdot 10^{-3}$ & $3\cdot 10^{-4}$ & $10^5$ \\\hline
   (b) & $10^{19}$ & $10^{21}$ & $6 \cdot 10^2$ & $5.35 \cdot 10^{-8}$ & $25 \cdot 10^6$  &  $400 \times 100$ &$ 5 \cdot 10^{-4}$ & $3\cdot 10^{-4}$ & $4\cdot 10^{6}$\\\hline
  \end{tabular}
\end{table}

Two simulations are carried out on this problem, both with the AP-Moment scheme (note that the Gauss elliptic correction is not implemented in these simulations). The plasma and discretization parameters are specified in Table~\ref{tab:KMC}. The evolution of the magnetic field for the simulation (a), reported in Figure \ref{fig:KMC}, shows a rapid magnetization of the plasma in the region where the density gradient is located. The electrons emitted at the cathode produce a current which prevents a uniform penetration of the magnetic field in the plasma. This current is gradually deflected with the propagation of the magnetic field, as depicted in Figures \ref{fig:KMC:c} and \ref{fig:KMC:d}. We evaluate the speed of the KMC waves in the simulations by marking the position of a certain magnetic level set at different times (see Figure~\ref{fig2:KMC}). The observed values are in agreement with the theoretical values: in the simulation (a), it is 76\% of the theoretical value, while in the simulation (b), it reaches 82\% of the theoretical value. These good numerical results are obtained with discretizations that do not resolve the Debye length. In particular, for the discretization (a), the mesh size is $10^4$ times larger than the Debye length. 

\begin{figure}[!ht]
  \centering
  \subfigure[Magnetic field $B_z$ (in Teslas) at time $t_4=2.58$ ns. \label{fig:KMC:a}]{%
  \includegraphics[width=0.9\textwidth]{./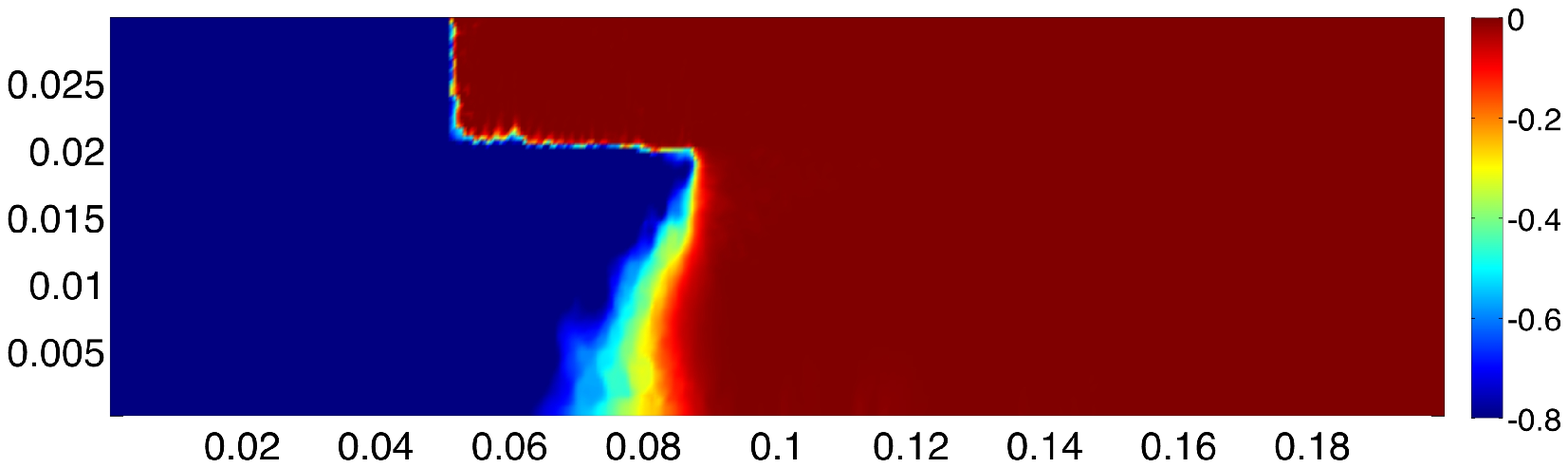}}
  \subfigure[Magnetic field $B_z$ (in Teslas) at time $t_6=3.67$ ns. \label{fig:KMC:b}]{%
  \includegraphics[width=0.9\textwidth]{./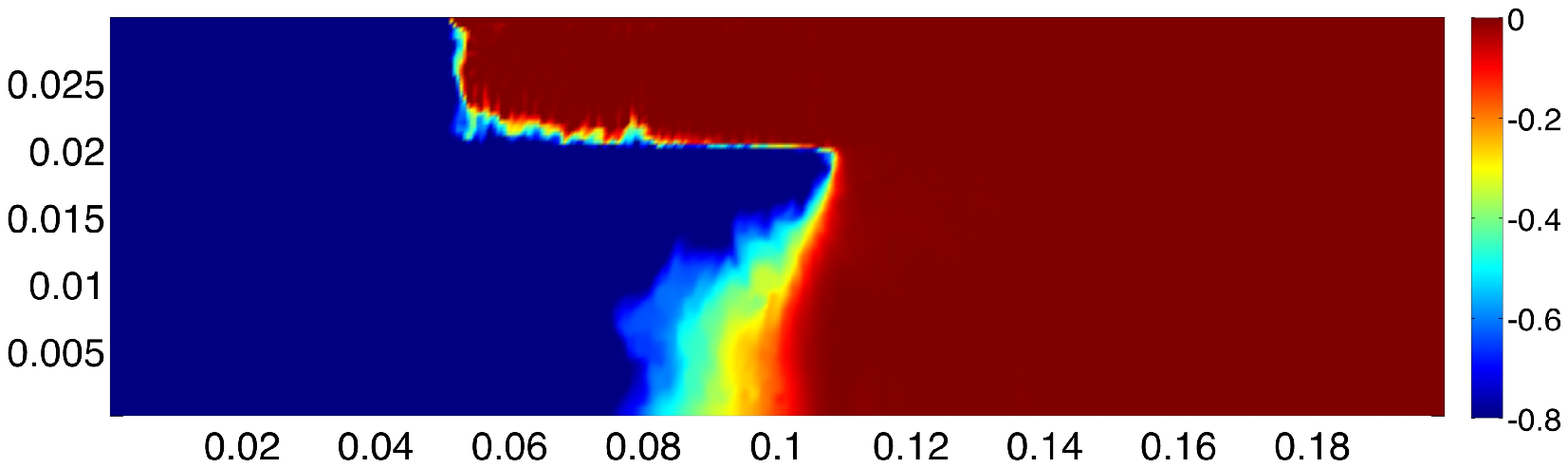}}
  \subfigure[Current density $J_e$ (in A$\cdot$ m$^{-2}$, in decimal log-scale) at time $t_4=2.58$ ns. \label{fig:KMC:c}]{%
  \includegraphics[width=0.4\textwidth]{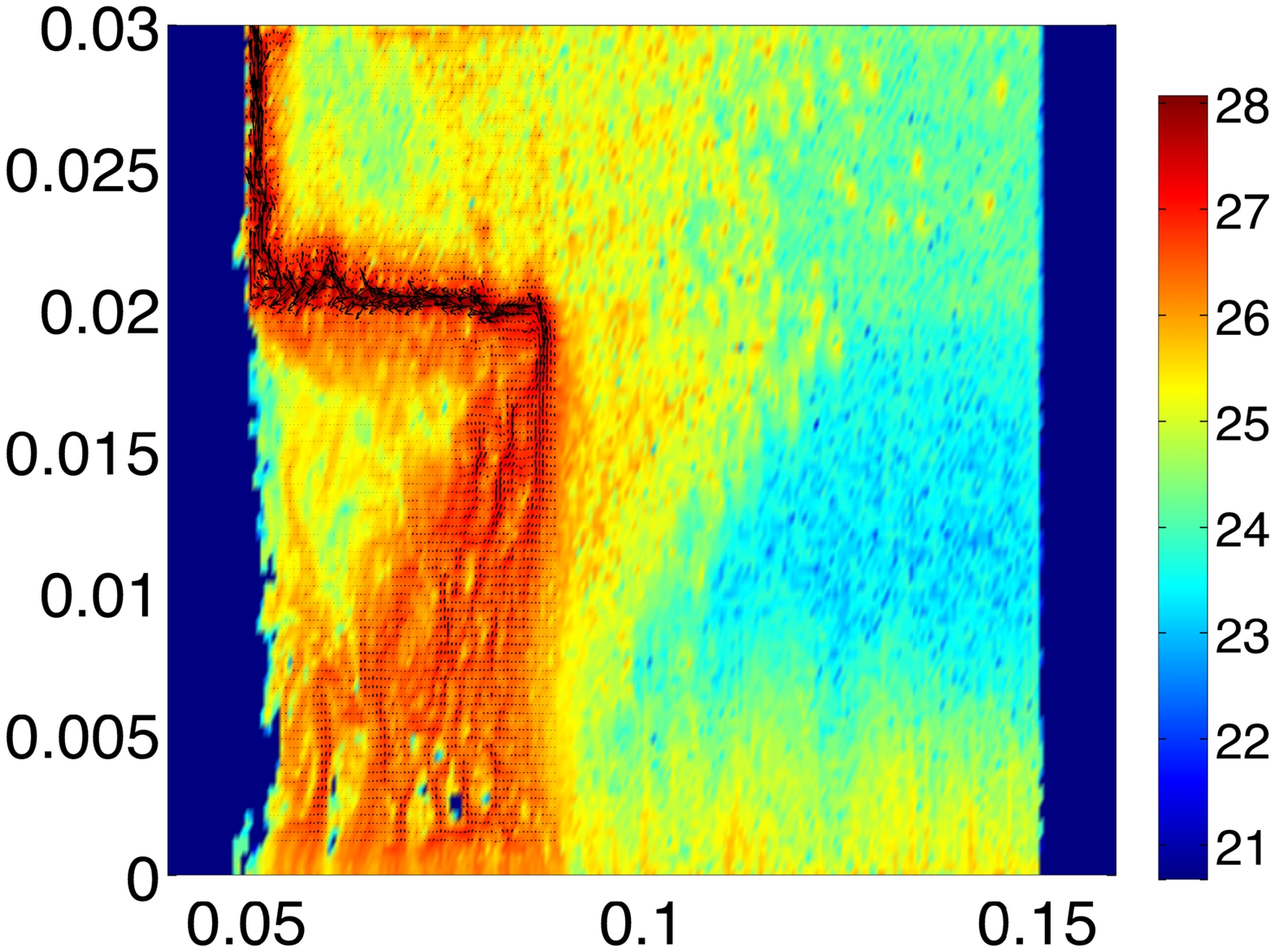}}\hspace*{0.05\textwidth}%
  \subfigure[Current density $J_e$ (in A$\cdot$ m$^{-2}$, in decimal log-scale) at time $t_6=3.67$ ns. \label{fig:KMC:d}]{%
    \includegraphics[width=0.4\textwidth]{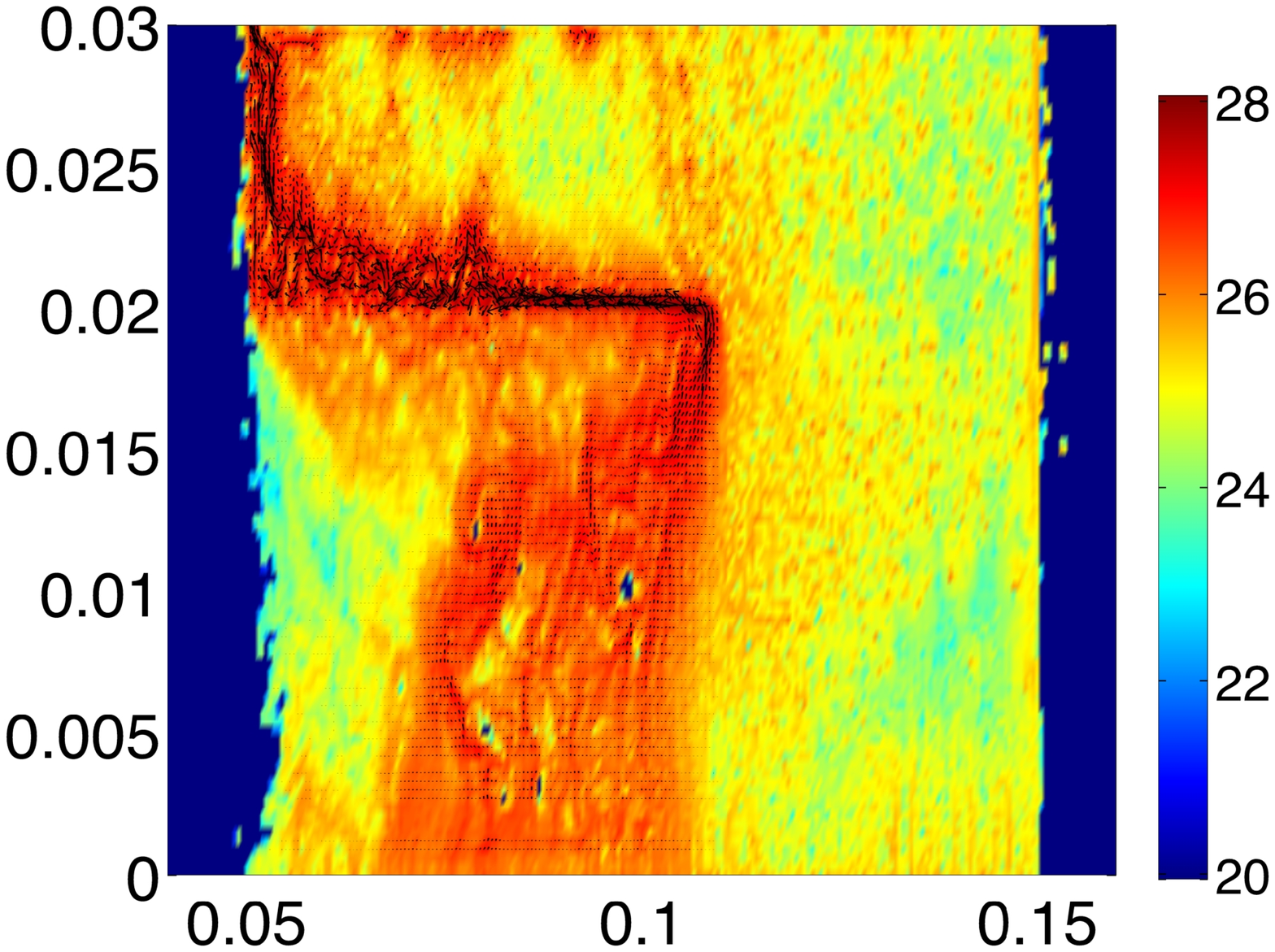}}
  \caption{Propagation of a KMC wave in a POS. Numerical results computed using the AP-Moment scheme, with discretization (b).}
  \label{fig:KMC}
\end{figure}

\begin{figure}[ht]
  \centering
  \subfigure[Configuration (a). Level sets of the magnetic field at times $t_1^{(a)}=1.75$, $t_2^{(a)}= 2.16$, $t_3^{(a)}=2.56$, $t_4^{(a)}=3.03$, $t_5^{(a)}=3.56$ and $t_6^{(a)}=4.09$ ns. \label{tab:KMC:a}]{%
    \psfrag{ T05}[l][l][1.]{ $t^{(a)}_1$}
    \psfrag{ T10}[l][l][1.]{ $t^{(a)}_2$}
    \psfrag{ T15}[l][l][1.]{ $t^{(a)}_3$}
    \psfrag{ T20}[l][l][1.]{ $t^{(a)}_4$}
    \psfrag{ T25}[l][l][1.]{ $t^{(a)}_5$}
    \psfrag{ T30}[l][l][1.]{ $t^{(a)}_6$}
    \includegraphics[width=0.8\textwidth]{./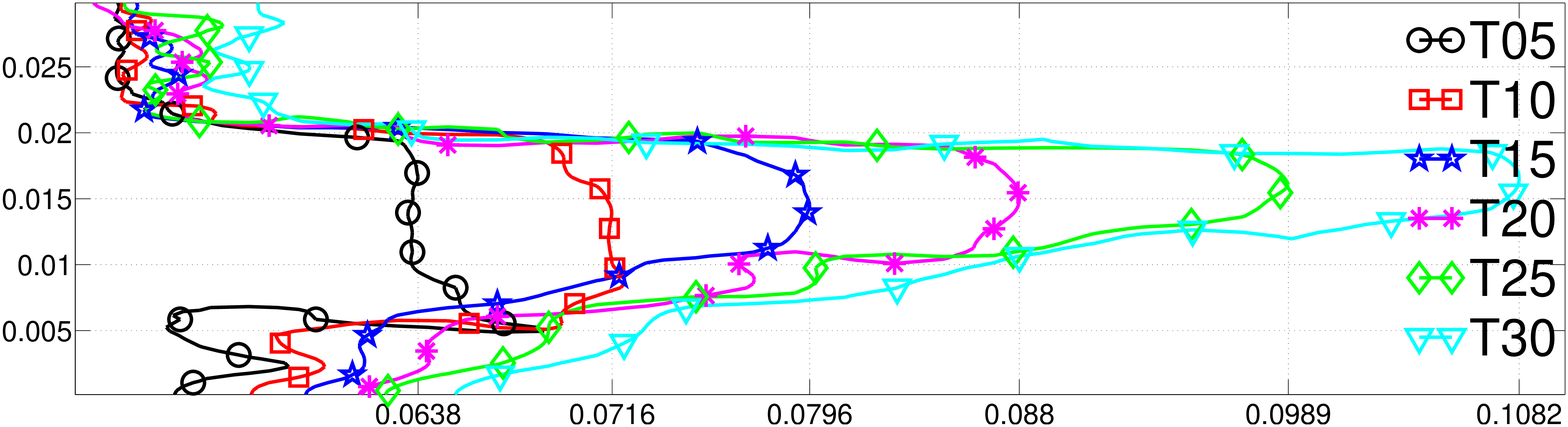}}

  \subfigure[Configuration (b). Level sets of the magnetic field at times $t_1^{(b)}=1.24$, $t_2^{(b)}= 1.62$, $t_3^{(b)}=2.06$, $t_4^{(b)}=2.58$, $t_5^{(b)}=3.11$ and $t_6^{(b)}=3.67$ ns.\label{tab:KMC:b}]{%
   \psfrag{ T05}[l][l][1.]{ $t^{(b)}_1$}
    \psfrag{ T10}[l][l][1.]{ $t^{(b)}_2$}
    \psfrag{ T15}[l][l][1.]{ $t^{(b)}_3$}
    \psfrag{ T20}[l][l][1.]{ $t^{(b)}_4$}
    \psfrag{ T25}[l][l][1.]{ $t^{(b)}_5$}
    \psfrag{ T30}[l][l][1.]{ $t^{(b)}_6$}
    \includegraphics[width=0.8\textwidth]{./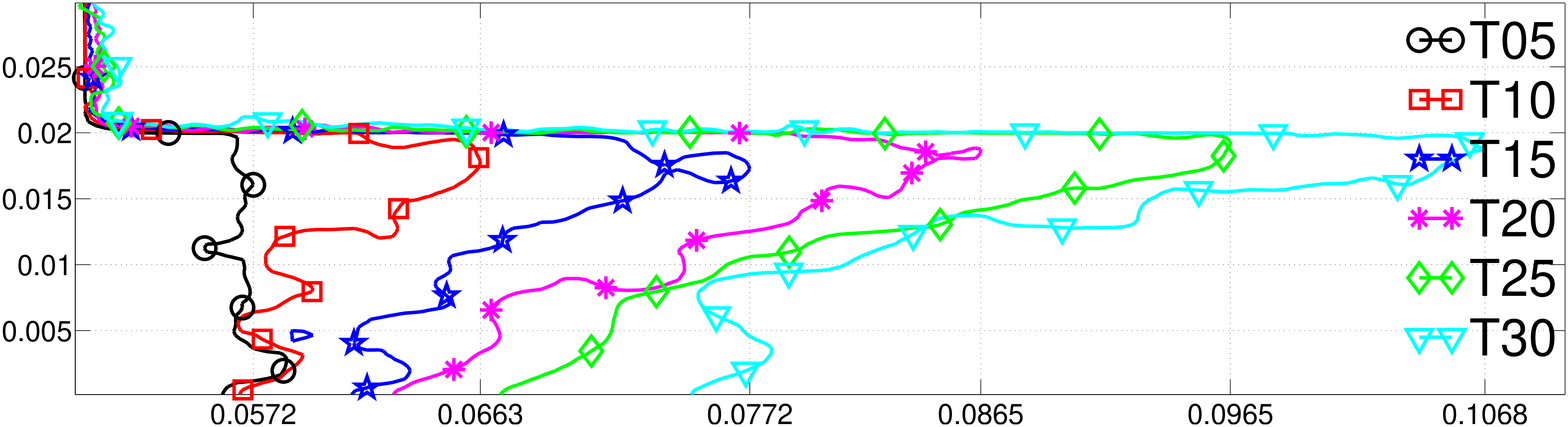}}

\renewcommand{\arraystretch}{1.2}
  \begin{tabular}[c]{|c|c|c|c|c|c|}\hline
  Config. & $V_{1-2}$ & $V_{2-3}$ & $V_{3-4}$ & $V_{4-5}$ & $V_{5-6}$ \\\hline
   (a) & $19.4\cdot 10^6$ & $19.8\cdot 10^6$ & $17.8\cdot 10^6$ & $20.7\cdot 10^6$ & $17.3\cdot 10^6$  \\\hline
   (b)& $24.3\cdot 10^6$ & $24.8\cdot 10^6$ & $17.9\cdot 10^6$ & $18.8\cdot 10^6$ & $18.2\cdot 10^6$  \\\hline
  \end{tabular}
  \caption{Propagation of a KMC wave in a POS. Estimation of the KMC wave speed in the simulations (a) and (b). The velocities $V_{i-j}$ are the estimated velocities on the time interval $[t_i, t_j]$. They are computed using the position of the level set $B_z=-0.8$ Tesla. }
  \label{fig2:KMC}
\end{figure}
\section{Conclusion}

We have derived two Asymptotic-Preserving Particle-In-Cell methods for the Vlasov-Maxwell system in the quasi-neutral limit. The scaling assumptions made for the definition of the quasi-neutral limit, similar to those used to derive the MHD models, yield a kinetic quasi-neutral model where the electric field is computed by means of a generalized Ohm law. The Asymptotic-Preserving methods are consistent with either the quasi-neutral model or the Vlasov-Maxwell model according to how the discretization parameters resolve the plasma parameters, which make them able to simulate complex plasma problems with a reasonable computational cost. No rigorous numerical analysis is provided in this article and this should be the subject of future work. However, the numerous numerical investigations demonstrate conclusively the efficiency of the methods to account for phenomena evolving at the plasma period and Debye length scales, as well as quasi-neutral phenomena usually well described by the MHD theory. In particular, they are able to cope with vacuum-dense plasma interfaces and the formation of non-neutral sheaths. Although other semi-implicit Particle-In-Cell methods (the Direct Implicit and Implicit Moment methods) present the same kind of Asymptotic-Preserving properties, the methodology developed in this article is new and provides a rigorous framework for the quasi-neutral limit problem. Furthermore, it opens the way for addressing more singular asymptotics and deriving more efficient numerical methods for some kinds of problems.

\medskip

\section*{Acknowledgments} This work has been carried out within the framework of the EUROfusion Consortium and has received funding from the Euratom research and training programme 2014-2018 under grant agreement No 633053. The views and opinions expressed herein do not necessarily reflect those of the European Commission.\\
Furthermore, the authors would like to acknowledge support from the ``Fondation Sciences et Technologies pour l'A\'eronautique et l'Espace'', in the frame of the project ``Plasmax'', the ANR PEPPSI  (Plasma Edge Physics and Plasma-Surface Interactions, 2013-2017) and ANR MOONRISE (MOdels, Oscillations and NumeRIcal SchEmes, 2015-2019).
This work has been supported by the National Science Foundation (NSF) under
grant RNMS11-07444 (KI-Net). PD is on leave from CNRS, Institut de
Math\'ematiques de Toulouse, France. PD acknowledges support from the Royal
Society and the Wolfson foundation through a Royal Society Wolfson Research
Merit Award.

\appendix

\section{Yee finite differences and $\theta$-scheme for the Maxwell equations}
\label{sec-prop-theta}

We consider the discretization of the homogeneous $d$-dimensional Maxwell equations ($d=$1, 2 or 3) with Yee finite differences for the space approximation and a $\theta$-scheme for the time integration. Using the discrete operators introduced in Section \ref{sec:defnum}, the discrete equations are
\begin{align}
& \frac{1}{c^2}\frac{E^{m+1}_h-E^{m}_h}{\Delta t} - \nabla_h \times \bar{B}^{m+\theta}_h =0\,,\label{theta1}\\
& \frac{B^{m+1}_h-B^{m}_h}{\Delta t} + \nabla_h \times \bar{E}^{m+\theta}_h=0\,,\label{theta2}
\end{align}
with
\begin{equation*}
\bar{E}^{m+\theta}_h = \theta E^{m+1}_h + (1-\theta) E^{m}_h \,, \qquad \bar{B}^{m+\theta}_h = \theta B^{m+1}_h + (1-\theta) {B}^{m}_h\,.
\end{equation*}
The above scheme is second-order accurate in space. It is first-order accurate in time for $\theta \neq \demi$ and second-order accurate for $\theta = \demi$. The following energy balance holds true:
\begin{equation}\label{theta-balance}
\E_h^{m+1} - \E_h^{m}= - \left(\theta-\demi \right)\left(\epsilon_0\left(E^{m+1}_h-E^{m}_h\right)^2+\frac{1}{\mu_0}\left(B^{m+1}_h-B^{m}_h\right)^2\right)\,,
\end{equation}
where $\E_h^{m}= \frac{\epsilon_0}{2} \|E^{m}_h\|^2 + \frac{1}{2\mu_0} \|B^{m}_h\|^2$. Therefore, the $\theta$-scheme is unconditionally stable for $\theta \in [\demi,1]$. It is dissipative for $\theta \in ]\demi,1]$ (all the more dissipative than $\theta$ is large) and energy-conserving for $\theta=\demi$. Unlike the leap-frog scheme, the $\theta$-scheme is dispersive even in a one-dimensional setting and with a Courant number equal to 1.

\section{Comparison of some particle pushers for the AP schemes}\label{sec:Part:Pusher}

We examine the properties of some particle pushers within the AP schemes. The particle pushers we consider are variants of the Boris scheme:
\begin{align}
&\frac{X_{N,j}^{m+1}-X_{N,j}^m}{\Delta t}=V_{N,j}^{m+b}\,,\\ 
& \frac{V_{N,j}^{m+1}-V_{N,j}^m}{\Delta t}= - E_h^{m+c}(X_{N,j}^{m+a})-\frac{V_{N,j}^++V_{N,j}^-}{2} \times B_h^{m}( X_{N,j}^{m})\,,\\
&V_{N,j}^{+}=V_{N,j}^{m+1}-\demi \Delta t E^{m}_h\left(X_{N,j}^{m}\right)\,,\\ 
&V_{N,j}^{-}=V_{N,j}^{m}+\demi \Delta t E^{m}_h\left(X_{N,j}^{m}\right)\,,
\end{align}
with $(a,b,c) \in \{0,1\}^3$. 

Numerical simulations show that the electric field and the velocity must be made implicit ($b=1$, $c=1$) to overcome the stability condition (\ref{cst-plasma}). They also show that, whatever the choice for $(a,b,c)$, the scheme remains subject to the Courant condition (\ref{cst-particle-velocity}). An explicit discretization of the position ($a=0$) is preferable to an implicit discretization ($a=1$), since it yields a scheme easier to solve and more accurate. In particular, in the case of a constant electric field and a zero magnetic field, the choice $a=0$ yields a symplectic scheme, unlike the choice $a=1$. Symplectic time-integration schemes ensure excellent conservation properties and an accurate behavior in long-time simulations \cite{HLW02}.

\section{KMC waves}\label{sec:KMC2}

The existence of KMC waves can be derived from the quasi-neutral equations (with motionless ions) presented in Section \ref{sec:QN:regime}. Neglecting the inertia term $\partial_t J$ and the pressure term $\nabla \cdot S$ in the generalized Ohm law \eqref{momentJ} and rewriting it with dimensional variables, we obtain the relation
\begin{equation*}
  n E - \frac{1}{e} J \times B = 0
\end{equation*}
Then, using the Maxwell-Amp\`ere equation and Maxwell-Faraday equations (still with dimensional variables), we deduce that the magnetic field evolution is governed only by the Hall term:
\begin{equation*}
  \partial_t B = - \nabla \times \big( ({\nabla \times  B \times B)/ (\mu_0  e n})  \big) \,.
\end{equation*}
In the two-dimensional setting \eqref{2DVM1}-\eqref{2DVM2}, if the density does not vary along the $x$-axis and does vary along the $y$-axis, the above equation simplifies into a Burgers-like nonlinear hyperbolic equation:
\begin{equation*}
  \partial_t B_z =- \frac{1}{2 e \mu_0} \partial_y \left(\frac{1}{n}\right) \partial_x B_z^2 \,.
\end{equation*}
This equation admits shock wave solutions, with a speed
\begin{equation}\label{eq:def:KMC:Speed}
  V = \frac{B_z}{e \mu_0 } \partial_y \left(\frac{1}{n}\right)\,.
\end{equation}
They are the so-called KMC waves, named after Kingsep, Mokhov and Chukbar.

\bibliographystyle{abbrv}
\bibliography{plasma}


\end{document}